\newtheorem{definition}{Definition}%[section]
\newtheorem{lemma}{Lemma}%[section]
\newtheorem{cor}{Corollary}%[section]
\newtheorem{theorem}{Theorem}
\newcommand{\cond}{\,\vert\,}
\newcommand{\defeq}{\triangleq}
\newfont{\bbb}{msbm10 scaled 500}
\newfont{\bb}{msbm10 scaled 1100}
\newcommand{\FF}{\mbox{\bb F}}
\newcommand{\DoF}{\mathsf{DoF}}
\newcommand{\Fc}{{\cal F}}
\newcommand{\Hc}{{\cal H}}
\newcommand{\Ic}{{\cal I}}
\newcommand{\Jc}{{\cal J}}
\newcommand{\Kc}{{\cal K}}
\newcommand{\Lc}{{\cal L}}
\newcommand{\Mc}{{\cal M}}
\newcommand{\Sc}{{\cal S}}
\newcommand{\Wc}{{\cal W}}
\renewcommand{\arg}{{\hbox{arg}}}
\newcommand{\eqdef}{\stackrel{\Delta}{=}}
\DeclareFontFamily{U}{cmfi}{}
\DeclareFontShape{U}{cmfi}{m}{n}{ <-> cmfi10 }{}
\DeclareSymbolFont{CMFI}{U}{cmfi}{m}{n}
\newcommand{\SNR}{\mathsf{snr}}
\def\@copyrightspace{\relax}
\begin{document}

\title{Content Delivery in Erasure Broadcast Channels with Cache and Feedback}

\author{\IEEEauthorblockN{}
\IEEEauthorblockA{Asma Ghorbel, Mari Kobayashi, and Sheng Yang \\
LSS, CentraleSup\'elec \\
Gif-sur-Yvette, France\\
 {\tt \{asma.ghorbel, mari.kobayashi, sheng.yang\}@centralesupelec.fr}
}
}

\maketitle
\begin{abstract}
 We study a content delivery problem in a $K$-user erasure broadcast channel such
that a content providing server wishes to deliver requested files to users, each equipped with a
cache of a finite memory. 
 Assuming that the
 transmitter has state feedback and user caches can be filled during off-peak hours reliably by
 the decentralized content placement, we characterize the achievable rate region as a function of the
 memory sizes and the erasure probabilities.   
The proposed delivery scheme, based on the broadcasting
scheme by Wang and Gatzianas et al., exploits the
receiver side information established during the
placement phase. Our results can be extended to
centralized content placement as well as multi-antenna broadcast channels with state feedback. 
\end{abstract}

\section{Introduction}
Today's exponentially growing mobile data traffic is mainly due to video applications such as content-based video 
streaming. 
The skewness of the video traffic together with the ever-growing cheap on-board storage memory suggests
that the quality of experience can be boosted by 
caching popular contents at (or close to) the end-users in wireless networks. 
A number of recent works have studied such concept under different models and assumptions (see \cite{maddah2013fundamental,golrezaei2011femtocaching} and references therein). 
Most of existing works assume that caching is performed in two phases: {\it placement phase} to prefetch users' caches under their memory constraints 
(typically during off-peak hours) prior to the actual demands; {\it delivery phase} to transmit codewords 
such that each user, based on the received signal and the contents of its
cache, is able to decode the requested file. 
In this work, we study the delivery phase based on a coded caching model where a server is connected to many users, each equipped with a cache of finite memory \cite{maddah2013fundamental}. By carefully choosing the sub-files to be distributed
across users, coded caching exploits opportunistic multicasting such that a common signal is simultaneously
useful for all users even with distinct file requests. A number of extensions of coded caching have
been developed (see e.g. \cite[Section VIII]{maddah2013fundamental}). These include the decentralized
content placement \cite{maddah2013decentralized}, online coded caching \cite{pedarsani2013online},
non-uniform popularities \cite{ji2015order,niesen2013coded}, more general networks such as device-to-device (D2D) enabled network \cite{ji2013fundamental}, hierarchical networks \cite{Hierarchical2014}, heterogeneous networks \cite{hachem2015effect}, as well as the performance analysis in different regimes \cite{Allerton2014,Ajaykrishnan}. 
Further, very recent works have attempted to relax the unrealistic assumption of a perfect shared
link by replacing it by wireless channels (e.g.
\cite{huang2015performance,timo2015joint,zhang2015coded,allerton2015,Elia2015}). If wireless channels
are used only to multicast a common signal,  naturally the performance of coded caching (delivery
phase) is limited by the user in the worst condition of fading channels as observed in
\cite{huang2015performance}. This is due to the information theoretic limit, that is, the
multicasting rate is determined by the worst user \cite[Chapter 7.2]{el2011network}. However, if the
underlying wireless channels enjoy some degrees of freedom to convey simultaneously both private
messages and common messages, the delivery phase of coded caching can be further enhanced. 
In the context of multi-antenna broadcast channel and erasure broadcast channel, the potential gain of   
coded caching in the presence of channel state feedback has been demonstrated \cite{zhang2015coded,allerton2015,Elia2015}. The key observation behind \cite{allerton2015,Elia2015} is that opportunistic multicasting can be performed based on either the receiver side information established during the placement phase or the channel state information acquired via feedback. 

In this work, we model the bottleneck link between the server with $N$ files and $K$ users equipped
with a cache of a finite memory as an erasure broadcast channel (EBC). The simple EBC captures the essential features of wireless channels such as random failure or disconnection of any server-user link that a packet transmission may experience during high-traffic hours, i.e. during the delivery phase. 
In this work, we consider a memoryless EBC in which erasure is independent across users with
probabilities $\{\delta_k\}$ and each user $k$ can cache up to $M_k$ files.  
Moreover, the server is assumed to acquire the channel states causally via feedback sent by the
users. Assuming that users fill the caches randomly and independently according to the
decentralized content placement scheme as proposed in~\cite{maddah2013decentralized}, we study the achievable rate region of the EBC with cache and state feedback. 
Our main contribution is the characterization of the rate region in the cache-enabled EBC with state feedback for the case of the decentralized content placement (Theorem 1). 
The converse proof builds on the genie-aided bounds exploiting two key lemmas, i.e. a generalized
form of the entropy inequalities (Lemma 1) as well as the reduced entropy of messages in the presence
of receiver side information (Lemma 2). For the achievability, we present a multi-phase delivery scheme extending the
algorithm proposed independently by Wang~\cite{wang2012capacity} and by Gatzianas et
al.~\cite{gatzianas2013multiuser} to the case with receiver side information and prove that it
achieves the optimal rate region for special cases of interest. We provide, as a byproduct of the achievability proof for the symmetric network, an alternative proof for the sum capacity of the EBC with state feedback and without cache. More specifically, we characterize the order-$j$ capacity defined as the maximum transmission rate of a message intended to $j$ users and express the sum capacity in a convenient manner along the line of \cite{maddah2010degrees}. 
This allows us to characterize the rate region of the symmetric cache-enabled EBC with state
feedback easily, since as such all we need is to incorporate the packets generated during the
placement phase~\cite{allerton2015}. 
However, such proof exploits the specific structure of 
the rate region of symmetric networks, 
and unfortunately cannot be applied to a general network setting considered here. 
Our current work provides a non-trivial extension of \cite{allerton2015} to such networks. 
Furthermore, we show that our results can be extended in a straightforward manner to the centralized
content placement \cite{maddah2013fundamental} as well as the multi-antenna broadcast channel with
state feedback. Finally, we provide some numerical examples to quantify the benefit of state feedback, the relative merit of the centralized caching to the decentralized counterpart, as well as the gain due to the optimization of memory sizes, as a function of other system parameters.

The rest of the paper is organized as follows. In section~\ref{section:MainResult}, we describe the
system model together with some definitions and then summarize the main results. Section~\ref{section:UpperBound} gives the converse proof of the achievable rate region of the cache-enabled
EBC with state feedback. After a high-level description of the well-known algorithm by Wang and
Gatzianas et al. in section~\ref{section:Revisiting}, section~\ref{section:achievability2} presents our proposed delivery scheme and provides the achievability proof for some special cases of interest. Section~\ref{section:Extensions} provides the extensions of the previous results and section~\ref{section:Examples} shows some numerical examples.

Throughout the paper, we use the following notational conventions. The superscript notation $X^n$ represents a sequence $(X_1,\ldots,X_n)$  of variables.
$X_{\Ic}$ is used to denote the set of variables $\{X_i\}_{i\in\Ic}$. 
The entropy of $X$ is denoted by $H(X)$. We let $[k]=\{1,\dots, k\}$. We let $\epsilon_n$ denote a constant which vanishes as $n\rightarrow \infty$, i.e. 
$\lim_{n\rightarrow \infty} \epsilon_n=0$. 

%%%%%%%%%

\section{System Model and Main Results} \label{section:MainResult}
\subsection{System model and definitions}
\begin{figure}
\vspace{-10pt}
\begin{center}
\includegraphics[width=0.45\textwidth,clip=]{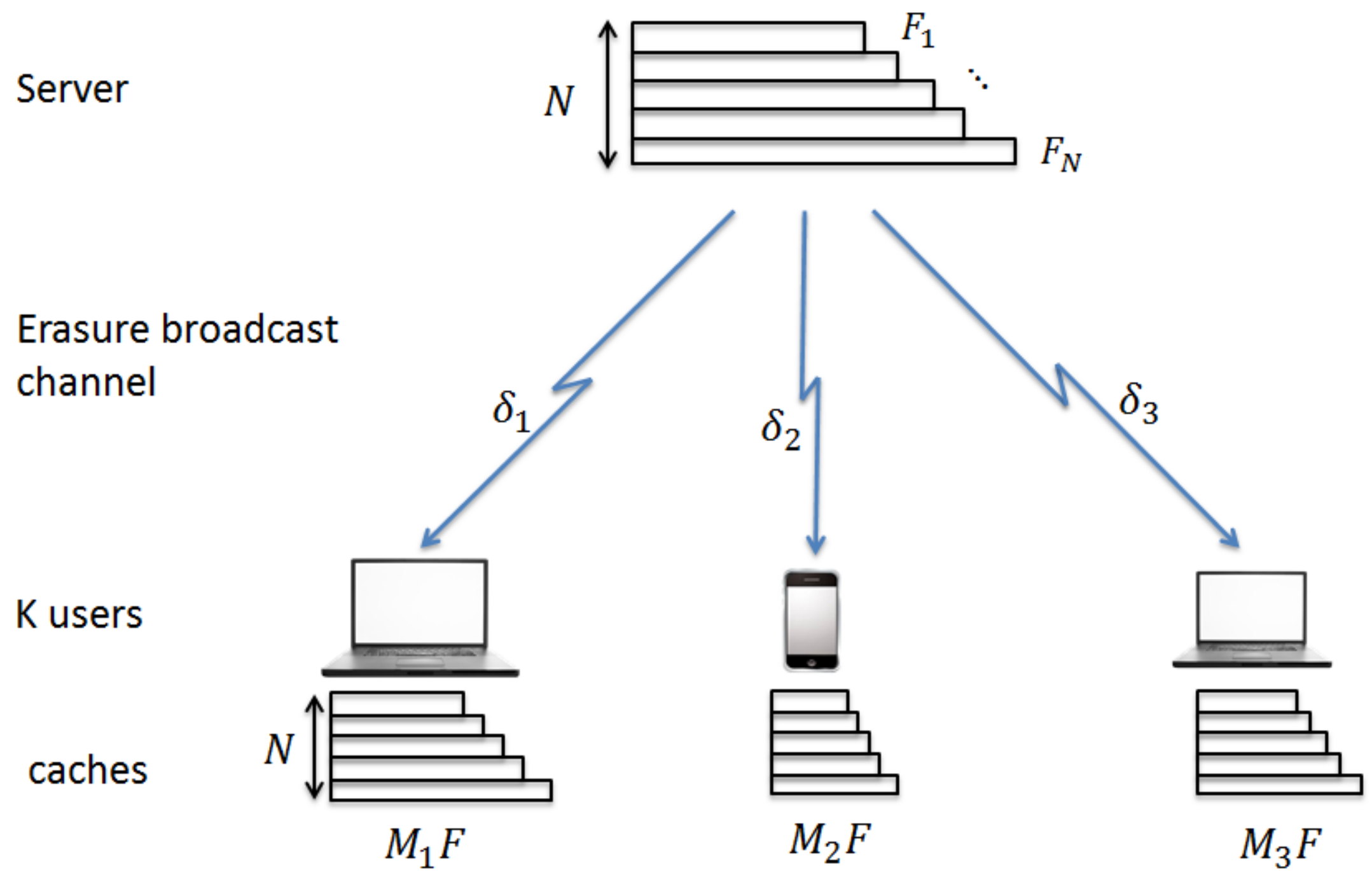}
\vspace{-2pt}
\caption{A cached-enabled erasure broadcast channel with $K=3$.}
\label{fig:model}
\end{center}
\vspace{-15pt}
\end{figure}
We consider a cache-enabled network depicted in Fig. \ref{fig:model} where a server is connected to $K$ users through an erasure broadcast channel (EBC). 
The server has an access to $N$ files $W_1, \dots, W_N$ where the $i$-th file $W_i$ consists of $F_i$ packets of $L$ bits each ($F_i L$ bits). Each user
$k$ has a cache memory $Z_k$ of $M_kF$ packets for $M_k\in [0,N]$, where
$F\defeq\frac{1}{N}\sum_{i=1}^N F_i$ is the average size of the files. 
Under such a setting, consider a discrete time communication system where a packet is sent in
each slot over the $K$-user EBC. The channel input $X_k\in  \FF_q$ belongs to the input alphabet
of size $L\defeq\log_2 q$ bits.
 %\footnote{We assume that $L> \log_2 K$ so that the achievability results of \cite{gatzianas2013multiuser} hold.}. 
 The erasure is assumed to be memoryless and independently distributed across users so that in a given slot we have 
\begin{align}\label{eq:EBC}
& \Pr( Y_{1}, Y_{2}, \dots, Y_{K}|X ) = \prod_{k=1}^K \Pr(Y_k | X) \\
& \Pr ( Y_k |X) =\begin{cases}
1-\delta_k,& Y_k=X,\\
\delta_k,& Y_k = E
\end{cases}
\end{align}
where $Y_k$ denotes the channel output of receiver $k$, $E$ stands for an erased output, $\delta_k$ denotes the erasure probability of user $k$. We let $S_l\in \Sc=2^{\{1,\dots, K\}}$ denote the
state of the channel in slot $l$ and indicate the set of users who received
correctly the packet. We assume that all the receivers know instantaneously $S_l$, and that
through feedback the transmitter only knows the past states $S^{l-1}$ during slot $l$.

The caching network is operated in two phases: the placement phase and the delivery phase.
In the content placement phase, the server fills the caches of all users, $Z_1, \dots,
Z_K$, up to the memory constraint. As in most works in the literature, we
assume that the placement phase incurs no error and no cost,
since it takes place usually during off-peak traffic hours. Once each user
$k$ makes a request $d_k$, the server sends the codewords so that each user can
decode its requested file as a function of its cache contents and received
signals during the delivery phase. We provide a more formal definition below.  A
$(M_1, \dots, M_K, F_{d_1}, \dots, F_{d_K}, n)$ caching scheme consists of the following
components.
\begin{itemize}
\item $N$ message files $W_1,\dots, W_N$ independently and uniformly distributed over $\Wc_1 \times \dots \times \Wc_N$ with $\Wc_i = \FF_q^{F_i}$ for all $i$.
\item $K$ caching functions defined by $\phi_k: \FF_q^{\sum_{i=1}^N F_i} \to\FF_q^{FM_k}$ that map
  the files $W_1, \dots, W_N$ into user~$k$'s cache contents 
\begin{align}
Z_k = \phi_k(W_1,\dots, W_N), \quad k\in[K].
\end{align}
\item A sequence of encoding functions which transmit at slot $l$ a symbol
  $X_l = f_l (W_{d_1}, \dots, W_{d_K}, S^{l-1})\in \FF_q$, based on the
  requested files and the state feedback up to slot $l-1$ for $l=1,
  \dots, n$, where $W_{d_k}$ denotes the message file requested by user $k$ for $d_k \in \{1, \dots, N\}$. 
\item $K$ decoding functions defined by $\psi_k: \FF_q^n
  \times\FF_q^{FM_k} \times \Sc^n \to \FF_q^{F_{d_k}}$, $k\in[K]$, that
decode the file $\hat{W}_{d_k}= \psi_k(Y_k^n, Z_k, S^n)$ as a function of the received signals $Y_k^n$, the cache content $Z_k$, as well as the state information $S^n$. 
\end{itemize}
A rate tuple $(R_1,\dots, R_K)$ is said to be achievable if, for every
$\epsilon>0$, there exists a $(M_1, \dots, M_K, F_{d_1},\dots, F_{d_K}, n)$ caching strategy that satisfies the reliability condition 
\[ \max_{(d_1,\dots, d_K)\in \{1,\dots, N\}^K} \max_k  \Pr( \psi_k(Y_k^n, Z_k, S^n) \neq W_{d_k} ) < \epsilon \]
as well as the rate condition 
\begin{align}
 R_k  <  \frac{F_{d_k}}{n}\quad \forall k\in[K]. 
\end{align} 
Throughout the paper, we express for brevity the entropy and the rate in terms of packets in oder to avoid the constant factor $L=\log_2 q$.

\subsection{Decentralized content placement}\label{subsection:Decentralized}
We mainly focus on the decentralized content placement proposed in \cite{maddah2013decentralized}. Under the memory constraint of $M_kF$ packets, each user $k$ independently caches a subset of $p_kF_i$  packets of file $i$, chosen uniformly at random for $i=1,\dots, N$, where $p_k=\frac{M_k}{N}$. 
By letting $\Lc_{\Jc}(W_i)$ denote the sub-file of $W_i$ stored exclusively by the users in $\Jc$, the cache memory of user $k$ after the decentralized placement is given by
\begin{align} \label{eq:Zk}
Z_k =\{ \Lc_{\Jc} (W_i):  \Jc \subseteq [K], \ \Jc \ni k, \ i =1,\dots,N \}. 
\end{align} 
The size of each sub-file is given by
\begin{align}\label{eq:subfileWk}
  |\Lc_{\Jc}(W_i)|=\prod_{j\in\Jc}p_j\prod_{j\in[K]\setminus\Jc}(1-p_{j}) F_i +
  \epsilon_{F_i}
\end{align}
as $F_i\rightarrow \infty$. It can be easily verified that the memory constraint of each user  
is fulfilled, namely,
\begin{align}
  |Z_{k}|=\sum_{i=1}^{N}\sum_{\Jc:k\in\Jc}|\Lc_{\Jc}(W_i)|= %=\sum_{i=1}^{N}\left(F_i \sum_{\Jc:k\in\Jc}\prod_{j\in\Jc}p_j\prod_{j\in[K]\setminus\Jc}(1-p_{j})+ \epsilon_{F_i}\right)=
  \sum_{i=1}^{N}(F_i p_k+\epsilon_{F_i}) = M_kF + \sum_{i=1}^{N} \epsilon_{F_i}\label{eq:LLN}
\end{align}%
as $F_i\rightarrow \infty$ for all $i$. 
Throughout the paper, we assume that $F\to\infty$ and meanwhile $\frac{F_i}{F}$ converges to
some constant $\tilde{F}_i>0$. Thus, we identify all $\epsilon_{F_i}$ with a single
$\epsilon_F$. 

To illustrate the placement strategy, let us consider an example of $K=3$ users. 
After the placement phase, each file will be partitioned into 8 sub-files:
\begin{align}\label{eq:DecentralizedPlacement}
W_i=\{\Lc_{\emptyset}(W_{i}),\Lc_{1}(W_{i}), \Lc_{2}(W_{i}), \Lc_{3}(W_{i}),\Lc_{12}(W_{i}), \Lc_{13}(W_{i}), \Lc_{23}(W_{i}), \Lc_{123}(W_{i}) \}. 
\end{align}
Obviously, the sub-files received by the destination, e.g. $\Lc_{1}(W_1), \Lc_{12}(W_1),
\Lc_{13}(W_1),\Lc_{123}(W_1) $ for user 1 requesting $W_1$, need not be transmitted in the delivery phase.

\subsection{Main results}\label{subsection:Main_Result}

In order to present the main results, we specify two special cases. 
\begin{definition}
The cache-enabled EBC (or the network) is \emph{symmetric} if the erasure probabilities as well as the memory sizes are the same for all users, i.e. $\delta_1=\dots=\delta_K=\delta$, $M_1=\dots =M_K=M$, $p_1=\dots =p_K=p$.
\end{definition}
\begin{definition}\label{definition:onesidedfair}
The rate vector is said to be \emph{one-sided fair} in the cache-enabled EBC if $\delta_k \geq \delta_j$ and for $k\neq j$ implies 
\begin {align}
 %\begin{cases}
 \frac{1-p_k}{p_k} R_k &\geq \frac{1-p_j}{p_j}R_j,\quad\text{and}\\
 \delta_k R_k &\geq \delta_jR_j. 
 %\end{cases}  
\end {align}	
\end{definition}
For the special case where $p_k=0$, $\forall k\in[K]$, it is reduced to $\delta_k R_k \geq
\delta_jR_j$ which coincides with the one-sided fairness originally defined in \cite{wang2012capacity}. 

\begin{figure}
	\vspace{-10pt}
	\begin{center}
		\includegraphics[width=0.6\textwidth,clip=]{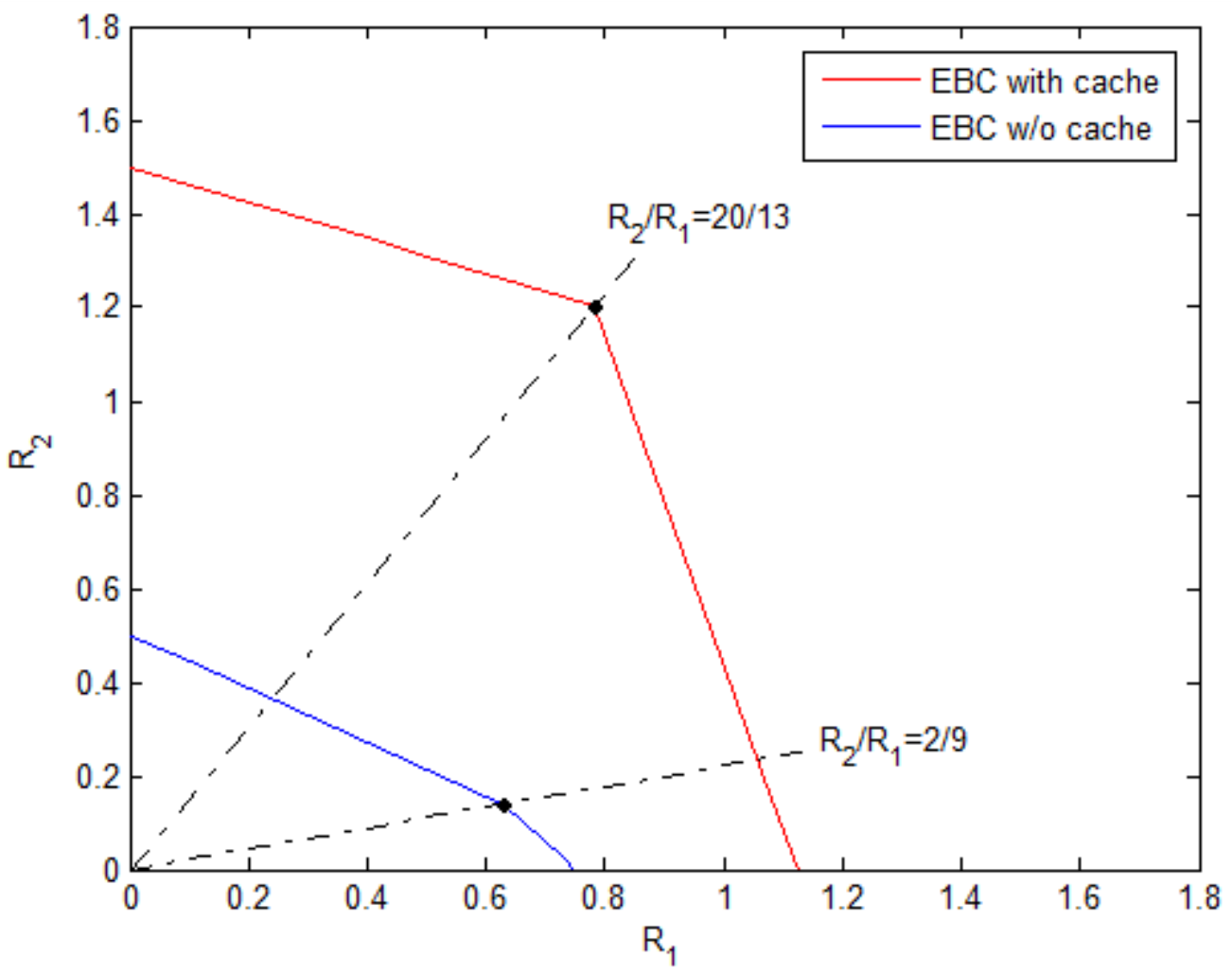}
		\vspace{-2pt}
		\caption{A two-user rate region with $(p_1, p_2)=(\frac{1}{3},\frac{2}{3})$, $(\delta_1,\delta_2)=(\frac{1}{4},\frac{1}{2})$.}
		\label{fig:capacity}
	\end{center}
	\vspace{-15pt}
\end{figure}

Focusing on the case of most interest with $N\geq K$ and $K$ distinct demands, we present the
following main results of this work.
\begin{theorem} \label{theorem:region}
For $K\leq 3$, or for the symmetric network with $K\geq3$, or for the one-sided fair rate vector with $K>3$, 
the achievable rate region of the cached-enabled EBC with the state feedback under the decentralized content placement
%, where user $k$ has a memory of size $p_kNF$ packets and an erasure probability $\delta_k$, 
is given by
\begin{align} \label{eq:wsr}
\sum_{k=1}^{K}\frac{\prod_{j=1}^{k}(1-p_{\pi_j})}{1-\prod_{j=1}^{k}\delta_{\pi_j}} R_{\pi_k}\leq 1
\end{align}
for any permutation $\pi$ of $\{1,\dots, K\}$. 
\end{theorem}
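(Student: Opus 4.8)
The theorem packages a converse and a matching achievability; I would prove them separately. The converse in fact holds for every $K$ (with $N\ge K$ and $K$ distinct demands), and it is only the achievability that forces the three listed regimes.

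\emph{Converse.} Fix a permutation $\pi$ and relabel the users so that $\pi$ is the identity, and write $a_k\defeq 1-\prod_{j=1}^{k}\delta_j$ (with $a_0\defeq 0$), $W^d_{[k]}\defeq(W_{d_1},\dots,W_{d_k})$ and $Z_{[k]}\defeq(Z_1,\dots,Z_k)$. The tool is a genie that at ``stage $k$'' hands receiver $k$ the extra data $(Y_1^n,\dots,Y_{k-1}^n,Z_{[k]},W^d_{[k-1]},S^n)$. Since user $k$ already decodes $W_{d_k}$ from $(Y_k^n,Z_k,S^n)$, Fano gives $H(W_{d_k}\mid Y_1^n,\dots,Y_k^n,Z_{[k]},W^d_{[k-1]},S^n)\le n\epsilon_n$. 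Combining this with Lemma~2 in the form $H(W_{d_k}\mid Z_{[k]},W^d_{[k-1]},S^n)=\bigl(\prod_{j=1}^{k}(1-p_j)\bigr)F_{d_k}+\epsilon_F$ — which rests on the cross-file and cross-user independence of the decentralized placement, on the demands being distinct, and on the law of large numbers \eqref{eq:subfileWk} for the sub-file sizes — and expanding the resulting mutual information, one obtains
\[ \bigl({\textstyle\prod_{j=1}^{k}(1-p_j)}\bigr)F_{d_k}+\epsilon_F \le H\bigl(Y_{[k]}^n\mid Z_{[k]},W^d_{[k-1]},S^n\bigr)-H\bigl(Y_{[k]}^n\mid Z_{[k]},W^d_{[k]},S^n\bigr)+n\epsilon_n. \]
Put $\Phi_k\defeq\frac{1}{a_k}H(Y_{[k]}^n\mid Z_{[k]},W^d_{[k]},S^n)$ for $k\ge1$ and $\Phi_0\defeq n$. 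Dividing the last display by $a_k$, bounding the positive term by the generalized entropy inequality of Lemma~1 — which lets $Y_{[k]}^n$ be replaced by $Y_{[k-1]}^n$ at the renormalized weight $a_{k-1}$ when conditioned on any function of the messages, here $(Z_{[k]},W^d_{[k-1]})$ — and then dropping $Z_k$ from the conditioning (for $k=1$ instead using $H(Y_1^n\mid Z_1,S^n)\le(1-\delta_1)n$), one gets $\frac{1}{a_k}\bigl(\prod_{j=1}^{k}(1-p_j)\bigr)F_{d_k}\le \Phi_{k-1}-\Phi_k+\frac{n\epsilon_n+|\epsilon_F|}{a_k}$. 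Summing over $k=1,\dots,K$ telescopes the right-hand side; since $\Phi_K\ge0$ and $\Phi_0=n$, dividing by $n$, using $R_k<F_{d_k}/n$ and letting $n,F\to\infty$ yields \eqref{eq:wsr} for the chosen $\pi$, hence for every permutation.

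\emph{Achievability.} I would start from the multi-phase broadcasting algorithm of Wang~\cite{wang2012capacity} and Gatzianas et al.~\cite{gatzianas2013multiuser} and \emph{pre-load} its queues with the side information created during placement. The placement splits $W_{d_k}$ into the sub-files $\Lc_{\Jc}(W_{d_k})$; user $k$ needs exactly those with $k\notin\Jc$, and such a sub-file with $|\Jc|=m$ is already known to the $m$ users of $\Jc$, so it is naturally an order-$(m{+}1)$ packet and can be injected directly into the phase-$(m{+}1)$ queue attached to $\Jc\cup\{k\}$. The scheme then runs $K$ phases exactly as in the cache-free case — phase~$1$ transmits the uncached parts $\Lc_{\emptyset}(W_{d_k})$, and phase~$j$ multicasts XOR combinations, across the $j$ users of each $j$-subset, of packets drawn from the (now pre-filled) phase-$j$ queues, using the state feedback to route erased or overheard packets into the appropriate higher-order queues. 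Accounting for the slots used in each phase in terms of the queue sizes — fixed by \eqref{eq:subfileWk} and by the feedback-driven splitting probabilities — produces an achievable polytope, and it remains to check that this polytope coincides with \eqref{eq:wsr} in the three regimes. For the symmetric network I would instead take the shorter route indicated in the introduction~\cite{allerton2015}: compute the order-$j$ capacity of the feedback EBC and observe that delivering the order-$j$ content at that rate for every $j$ fills the symmetric region exactly; this simultaneously yields the promised by-product on the sum capacity of the cache-free feedback EBC in the style of~\cite{maddah2010degrees}.

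\emph{Main obstacle.} The converse is bookkeeping once Lemmas~1 and~2 are in hand. The real difficulty is the achievability for general asymmetric networks: the modified scheme spreads, at each phase, a batch of residual packets over the higher-order queues, and one must show that all of them can be cleared within the slot budget dictated by \eqref{eq:wsr} — equivalently, that the scheme's region has no facet strictly interior to \eqref{eq:wsr}. This closure is not automatic for $K\ge4$; it is precisely the one-sided fairness of Definition~\ref{definition:onesidedfair} — consistently ordering $\frac{1-p_k}{p_k}R_k$ and $\delta_kR_k$ across users — that makes the queue-length recursion solvable in closed form and matchable to the bound, while for $K\le3$ the handful of queues involved can be matched by direct inspection. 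Pinning down this matching, and explaining why one-sided fairness is exactly the right hypothesis, is the step I expect to be delicate.
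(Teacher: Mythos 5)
Your overall architecture --- a genie-aided converse built on Lemmas 1 and 2, plus a Wang--Gatzianas scheme whose queues are pre-loaded with the placement side information --- is exactly the paper's. The converse is essentially the paper's argument: the paper lower-bounds $n\prod_{j=1}^k(1-p_j)R_k$ by $I(W_k;Y_{[k]}^n\mid W^{k-1}Z^kS^n)$ using Lemma 2 together with $I(W_k;W^{k-1}\mid Z^kS^n)=0$ (the caches store disjoint pieces of distinct files), then telescopes with $K-1$ applications of Lemma 1, just as you do; your $\Phi_k$ bookkeeping is a clean rewriting of the same chain, and your remark that the converse needs none of the three regime hypotheses is consistent with the paper. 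The symmetric achievability via order-$j$ sum capacities is also the paper's route.

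The genuine gap is the achievability for one-sided fair rate vectors with $K>3$: you correctly name this as the delicate step and then stop, but that step is the entire substance of the paper's achievability proof (its Section on the one-sided fair case and Appendices A--C). Concretely, what is missing is: (i) a closed-form inclusion--exclusion formula for the per-user sub-phase length, $t_{\Jc}^{\{k\}}=\sum_{\Hc\subseteq\Jc\setminus\{k\}}(-1)^{|\Hc|}w_{[K]\setminus\Jc\cup\{k\}\cup\Hc}\,F_k$ with $w_{\Ic}=\prod_{j\in\Ic}(1-p_j)\big/\bigl(1-\prod_{j\in\Ic}\delta_j\bigr)$, derived by induction from the aggregate identity $\sum_{\Ic:k\in\Ic\subseteq\Jc}t_{\Ic}^{\{k\}}=w_{[K]\setminus\Jc\cup\{k\}}F_k$; (ii) a proof, again by induction on $|\Jc|$ and using all three one-sided-fairness inequalities ($\delta_m\ge\delta_i$, $\delta_mR_m\ge\delta_iR_i$, and $\frac{1-p_m}{p_m}R_m\ge\frac{1-p_i}{p_i}R_i$ for $m=\min\Jc$), that the bottleneck user of every sub-phase is $\min\Jc$, so that $t_{\Jc}=t_{\Jc}^{\{\min\Jc\}}$ and the total length telescopes exactly to $\sum_{k}F_k\prod_{j=1}^{k}(1-p_j)\big/\bigl(1-\prod_{j=1}^{k}\delta_j\bigr)$, saturating one facet of \eqref{eq:wsr}; and (iii) a separate exchange argument showing that, under one-sided fairness, the saturated permutation inequality implies the remaining $K!-1$ inequalities of \eqref{eq:wsr}. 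Without (i)--(iii), the claim that the scheme's achievable polytope has no facet strictly interior to \eqref{eq:wsr} is unsupported, so the proposal as written establishes the converse and the symmetric case but not the full theorem.
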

The above region has a polyhedron structure determined by $K!$ inequalities in general. It should be remarked that Theorem \ref{theorem:region} covers some existing results. 
 For the symmetric network, the above region simplifies to \cite{allerton2015}
\begin{align} \label{eq:symmetric}
\sum_{k=1}^{K}\frac{(1-p)^k}{1-\delta^k}R_{\pi_k}\leq 1, ~~\forall \pi.
\end{align}
For the case without cache memory, i.e. $p_k=0$ for all $k$,  
Theorem \ref{theorem:region} boils down to the capacity region of the EBC with state feedback \cite{wang2012capacity,gatzianas2013multiuser} given by
\begin{align} \label{eq:asymmetric}
\sum_{k=1}^{K}\frac{1}{1-\prod_{j=1}^k \delta_{\pi_j}} R_{\pi_k}\leq 1, ~~\forall \pi
\end{align}
which is achievable for $K\leq 3$ or the symmetric network or the one-sided fair rate vector where $\delta_k  \geq \delta_j$ implies $\delta_k R_k \geq \delta_jR_j$
for any $k\neq j$. 
Comparing \eqref{eq:wsr} and \eqref{eq:asymmetric}, we immediately see that the presence of
cache memories decreases the weights in the weighted rate sum and thus enlarges the rate region.
In order to gain some further insight, 
Fig. \ref{fig:capacity} illustrates a toy example of two users with
$(p_1,p_2)=(\frac{1}{3},\frac{2}{3})$ and $(\delta_1,\delta_2)=(\frac{1}{4},\frac{1}{2})$.
According to Theorem \ref{theorem:region}, the rate region is given by
\begin{align}
\frac{8}{9}R_1+\frac{16}{63}R_2\leq 1  \nonumber \\
\frac{16}{63}R_1+\frac{2}{3}R_2\leq 1 
\end{align}
which is characterized by three vertices $(\frac{9}{8},0)$ $(0.78, 1.20)$, and $(0,\frac{63}{16})$. The vertex $(0.78, 1.20)$, achieving the sum rate of $1.98$, corresponds to the case when the requested files satisfy the ratio $F_{d_2}/F_{d_1}= 20/13$. 
On the other hand, the region of the EBC without cache is given by 
\begin{align}
\frac{4}{3}R_1+\frac{8}{7}R_2\leq 1  \nonumber \\
\frac{8}{7}R_1+2R_2\leq 1
\end{align}
which is characterized by three vertices $(\frac{3}{4},0)$, $(0.63, 0.14)$, $(0,\frac{1}{2})$. The sum capacity of $0.77$ is achievable for the ratio 
$R_2/R_1=2/9$. The gain due to the cache is highlighted even in this toy example. 

Theorem \ref{theorem:region} yields the following corollary. 
 \begin{cor}\label{cor:rate}
For $K\leq 3$, or for the symmetric network with $K\geq3$, or for the one-sided fair rate vector with $K>3$, the transmission length to deliver requested files to useres in the cached-enabled EBC under the decentralized content placement is given by
\begin{align}\label{eq:time}
T_{\rm tot}= \max_{\pi}\left\lbrace \sum_{k=1}^{K}\frac{
\prod_{j=1}^{k}(1-p_{\pi_j})}{1-\prod_{j=1}^{k}\delta_{\pi_j}} F_{d_{\pi_k}} \right\rbrace
+ \epsilon_{F},
\end{align}
as $F\rightarrow \infty$. 
\end{cor}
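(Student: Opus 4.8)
The plan is to obtain Corollary~\ref{cor:rate} as an immediate reformulation of Theorem~\ref{theorem:region}, using the elementary correspondence between an achievable rate tuple and the number of slots needed to deliver files of prescribed sizes. By the definition of a caching scheme in Section~\ref{section:MainResult}, delivering the requested files $W_{d_1},\dots,W_{d_K}$ of sizes $F_{d_1},\dots,F_{d_K}$ packets over $n$ channel uses is exactly the statement that the rate tuple $(R_1,\dots,R_K)$ with $R_k=F_{d_k}/n$ is achievable. Hence the minimal transmission length $T_{\rm tot}$ is the smallest $n$ (up to vanishing terms) such that $(F_{d_1}/n,\dots,F_{d_K}/n)$ lies in the region characterized by \eqref{eq:wsr}, and the corollary is just the ``time'' version of that polyhedral description.

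For the direct part, I would set
\begin{align}
n^{\star}\defeq \max_{\pi}\left\{ \sum_{k=1}^{K}\frac{\prod_{j=1}^{k}(1-p_{\pi_j})}{1-\prod_{j=1}^{k}\delta_{\pi_j}} F_{d_{\pi_k}} \right\},
\end{align}
and observe that by construction, for every permutation $\pi$, the weighted sum $\sum_{k}\frac{\prod_{j\le k}(1-p_{\pi_j})}{1-\prod_{j\le k}\delta_{\pi_j}}\,\frac{F_{d_{\pi_k}}}{n^{\star}}\le 1$, so that $(F_{d_k}/n^{\star})_{k}$ satisfies \eqref{eq:wsr}. By the achievability part of Theorem~\ref{theorem:region} this rate tuple is achievable, i.e. the requested files can be delivered with vanishing error using a block length exceeding $n^{\star}$ only by the overhead induced by the decentralized placement (the $\epsilon_{F_i}$ corrections in \eqref{eq:subfileWk}) and by the finite-horizon loss of the multi-phase delivery scheme of Section~\ref{section:achievability2}, all of which merge into a single $\epsilon_F$ as $F\to\infty$; this gives $T_{\rm tot}\le n^{\star}+\epsilon_F$.

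For the converse I would argue contrapositively: any scheme that delivers all requested files reliably in $n$ slots realizes the rate tuple $R_k=F_{d_k}/n$ (up to the usual $\epsilon_n$), which by the converse part of Theorem~\ref{theorem:region} must satisfy every inequality in \eqref{eq:wsr}; clearing $n$ yields $n\ge \sum_{k}\frac{\prod_{j\le k}(1-p_{\pi_j})}{1-\prod_{j\le k}\delta_{\pi_j}}F_{d_{\pi_k}}$ for all $\pi$, hence $n\ge n^{\star}-\epsilon_F$. Combining the two bounds gives \eqref{eq:time}. The only delicate point — and the main, though modest, obstacle — is the bookkeeping of the vanishing quantities: one must verify that the $\epsilon_{F_i}$ from the random placement, the $\epsilon_n$ from the reliability requirement, and the per-phase overhead of the delivery scheme can all be absorbed into a single additive $\epsilon_F$ in the regime $F\to\infty$ with $F_i/F\to\tilde F_i$, and that the governing worst case is the all-distinct demand (admissible since $N\ge K$), consistently with the hypotheses of Theorem~\ref{theorem:region}.
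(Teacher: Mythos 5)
Your proposal is correct and matches the paper's treatment: the paper derives Corollary~\ref{cor:rate} directly from Theorem~\ref{theorem:region} via exactly the correspondence $R_k=F_{d_k}/n$, so the transmission length is the smallest $n$ for which the scaled rate tuple satisfies every permutation inequality in \eqref{eq:wsr}, i.e. the maximum over $\pi$ of the weighted sums. Your additional care about absorbing the various vanishing terms into a single $\epsilon_F$ is consistent with the paper's conventions and does not change the argument.
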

The corollary covers some existing results in the literature. For the symmetric network with files of equal size ($F_i=F,\forall i$), the transmission length simplifies to 
\begin{align}\label{eq:rate2}
T_{\rm tot}=\sum_{k=1}^{K}\frac{ (1-p)^{k}}{1-\delta^{k}}F+\epsilon_{F},
\end{align}
as $F\rightarrow \infty$ \cite{allerton2015}. For the case with files of equal size and without
erasure, the transmission length in Corollary \ref{cor:rate} normalized by $F$ coincides with the ``rate-memory tradeoff'' \footnote{In
\cite{maddah2013decentralized} and all follow-up works, the ``rate'' is defined as the number of files
to deliver over the shared link, which corresponds to our $T_{\rm tot}$ here.} under the decentralized content
placement for asymmetric memory sizes \cite{wang2015fundamental} given by
\begin{align}\label{eq:Wang}
\frac{T_{\rm tot}}{F}=\sum_{k=1}^{K}\left[ \prod_{j=1}^{k}\left(1-p_j \right)\right], %\frac{M_j}{N}
\end{align}
where the maximum over all permutations is chosen to be identity by assuming $p_1\geq\cdots\geq p_K$.  %which implies that the maximum over all permutations in \eqref{eq:time} is given by identity.}
If additionally we restrict ourselves to the case with caches of equal size, we recover the
rate-memory tradeoff given in \cite{maddah2013decentralized} 
\begin{align}
\frac{T_{\rm tot}}{F}=  \frac{N}{M} \left(1-\frac{M}{N}\right) \left\{1-\left(1-\frac{M}{N}\right) ^K\right\}.
\end{align}
In fact, the above expression readily follows by applying the geometric series to the RHS of \eqref{eq:Wang}.   

%%%%%%%%%
\section{Converse} \label{section:UpperBound}
In this section, we prove the converse of Theorem \ref{theorem:region}. First we provide two useful lemmas. 
The first one is a generalized form of the entropy inequality, while the second one is a simple relation of the message entropy in the presence of receiver side information. Although the former has been proved in \cite{allerton2015}, we restate it for the sake of completeness. 
\begin{lemma}\cite[Lemma 5]{ShengISIT2015}\label{lemma:erasure-ineq}
  For the erasure broadcast channel, if $U$ is such that $X_l\leftrightarrow U  Y_{\Ic}^{l-1}  S^{l-1} \leftrightarrow
  (S_{l+1},\ldots,S_n)$, $\forall\,\Ic$,
 \begin{align} \label{eq:essential-erasure}
  \frac{1}{1-\prod_{j\in \Ic} \delta_j} H(Y^n_{\mathcal{I}} \cond U, S^n) \le \frac{1}{1-\prod_{j\in \Jc} \delta_j }
  H(Y^n_{\mathcal{J}} \cond U, S^n)
\end{align}  
for any sets $\Ic, \Jc$ such that $\Jc \subseteq \Ic \subseteq \left\{ 1,\ldots,K \right\}$.  
\end{lemma}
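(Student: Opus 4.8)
The plan is to establish \eqref{eq:essential-erasure} by decomposing both conditional entropies slot by slot and observing that, once the whole state sequence $S^n$ is in the conditioning, every slot contributes the \emph{same} ``effective input entropy'', scaled only by the erasure factor of the set of users under consideration. So I will first write an exact slot identity, and then repair the one cosmetic mismatch between the two sides with a ``conditioning reduces entropy'' step.

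Concretely, by the chain rule $H(Y_{\Ic}^n\cond U,S^n)=\sum_{l=1}^n H(Y_{\Ic,l}\cond Y_{\Ic}^{l-1},U,S^n)$, and the key slot identity I would prove is
\[ H(Y_{\Ic,l}\cond Y_{\Ic}^{l-1},U,S^n)=\Big(1-\prod_{j\in\Ic}\delta_j\Big)\,H\big(X_l\cond Y_{\Ic}^{l-1},U,S^{l-1}\big). \]
To see it, unpack $S^n=(S^{l-1},S_l,S_{l+1}^n)$ and condition on the realization $S_l=s_l$: if $s_l\cap\Ic=\emptyset$ the vector $Y_{\Ic,l}$ is all-erasure and contributes nothing; if $s_l\cap\Ic\neq\emptyset$ then, $s_l$ being known, each entry of $Y_{\Ic,l}$ is either $X_l$ or the erasure symbol, so $Y_{\Ic,l}$ and $X_l$ determine one another and the term equals $H(X_l\cond Y_{\Ic}^{l-1},U,S^{l-1},S_l=s_l,S_{l+1}^n)$. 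Here I use the hypotheses: since $X_l$ and $Y_{\Ic}^{l-1}$ are functions of the messages and of $S^{l-1}$ only while the erasures are memoryless, $S_l$ is independent of $(X_l,Y_{\Ic}^{l-1},U,S^{l-1},S_{l+1}^n)$, so $s_l$ may be dropped and carries its prior weight $\prod_{j\in s_l}(1-\delta_j)\prod_{j\notin s_l}\delta_j$; and the assumed Markov chain $X_l\leftrightarrow U\,Y_{\Ic}^{l-1}\,S^{l-1}\leftrightarrow(S_{l+1},\dots,S_n)$ lets me also drop $S_{l+1}^n$. Summing the surviving terms over $\{s_l:s_l\cap\Ic\neq\emptyset\}$, whose total probability is $1-\prod_{j\in\Ic}\delta_j$ by independence of erasures across users, gives the identity, whence $H(Y_{\Ic}^n\cond U,S^n)=\big(1-\prod_{j\in\Ic}\delta_j\big)\sum_{l=1}^n H(X_l\cond Y_{\Ic}^{l-1},U,S^{l-1})$.

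The only thing to repair is that the analogous computation for $\Jc$ naturally produces the past $Y_{\Jc}^{l-1}$ rather than $Y_{\Ic}^{l-1}$. I would fix this on the $\Jc$-side \emph{before} applying the slot identity: since $\Jc\subseteq\Ic$, $Y_{\Jc}^{l-1}$ is a sub-tuple of $Y_{\Ic}^{l-1}$, so by ``conditioning reduces entropy''
\[ H(Y_{\Jc}^n\cond U,S^n)\ \ge\ \sum_{l=1}^n H(Y_{\Jc,l}\cond Y_{\Ic}^{l-1},U,S^n)=\Big(1-\prod_{j\in\Jc}\delta_j\Big)\sum_{l=1}^n H\big(X_l\cond Y_{\Ic}^{l-1},U,S^{l-1}\big), \]
where the equality is exactly the previous slot computation read off with $\Jc$ in the ``reveal'' condition while keeping $Y_{\Ic}^{l-1}$ in the conditioning. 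Dividing the two displayed expressions by the respective (positive) erasure factors yields $\tfrac{1}{1-\prod_{j\in\Jc}\delta_j}H(Y_{\Jc}^n\cond U,S^n)\ge \sum_l H(X_l\cond Y_{\Ic}^{l-1},U,S^{l-1})=\tfrac{1}{1-\prod_{j\in\Ic}\delta_j}H(Y_{\Ic}^n\cond U,S^n)$, which is \eqref{eq:essential-erasure}.

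The routine parts are the chain-rule bookkeeping and the elementary fact that, conditioned on a slot's erasure pattern, an erasure-channel output vector is a deterministic function of the input (and, whenever at least one user in the set is served, conversely). The one genuinely delicate point I would write out with care is the justification that $S_l$ and $(S_{l+1},\dots,S_n)$ can be stripped from the conditioning of $H(X_l\cond\cdot)$: the former from the memoryless, independent-across-users erasure model (plus $X_l,Y_{\Ic}^{l-1}$ depending on the messages and $S^{l-1}$ only), the latter being precisely the role of the Markov hypothesis in the statement. One could alternatively reduce first to the single-extra-user case $\Ic=\Jc\cup\{m\}$ and telescope over the users removed from $\Ic$ to reach $\Jc$, but the argument above handles an arbitrary pair $\Jc\subseteq\Ic$ in one shot.
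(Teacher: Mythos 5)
Your proof is correct and follows essentially the same route as the paper's: a slot-by-slot chain-rule decomposition, the per-slot identity $H(Y_{\Ic,l}\cond Y_{\Ic}^{l-1},U,S^n)=\bigl(1-\prod_{j\in\Ic}\delta_j\bigr)H(X_l\cond Y_{\Ic}^{l-1},U,S^{l-1})$ justified by dropping $S_l$ and the future states via independence and the stated Markov condition, and a single application of ``conditioning reduces entropy'' to make the two sides comparable. The only (immaterial) difference is that you add $Y^{l-1}_{\Ic\setminus\Jc}$ to the conditioning on the $\Jc$-side so both sides meet at $\sum_l H(X_l\cond Y_{\Ic}^{l-1},U,S^{l-1})$, whereas the paper removes it on the $\Ic$-side and meets at $\sum_l H(X_l\cond Y_{\Jc}^{l-1},U,S^{l-1})$.
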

\begin{proof}
%\subsection{Proof of Lemma~\ref{lemma:erasure-ineq}} \label{appendix:erasure-ineq}
%\begin{IEEEproof}[Proof of Lemma~\ref{lemma:erasure-ineq}] %
  %Let us define $S_i$ as the set of indices of the receivers \emph{not} in erasure at time instant~$i$, i.e.,$S_i\defeq\{k\,:\ Y_{k,i} = X_{k,i} \}$. Then, 
  We have, for $\Jc\subseteq\Ic$,
  \begin{align}
    \MoveEqLeft[0]{H(Y^n_{\mathcal{I}} \cond U, S^n)}\\ 
    &= \sum_{l=1}^n H(Y_{\mathcal{I}, l} \cond
    Y^{l-1}_{\mathcal{I}}, U, S^n) \\
    &= \sum_{l=1}^n H(Y_{\mathcal{I},l} \cond Y^{l-1}_{\mathcal{I}}, U, S^{l-1}, S_l) \\
    &= \sum_{l=1}^n \mathrm{Pr}\{S_l\cap\mathcal{I}\ne\emptyset\} \, H(X_l \cond Y^{l-1}_{\mathcal{I}}, U, S^{l-1}, S_l\cap\mathcal{I}\ne\emptyset) \\
    &= \sum_{l=1}^n \bigl(1-\prod_{i\in\Ic}\delta_i\bigr) H(X_l \cond Y^{l-1}_{\mathcal{I}}, U, S^{l-1}) \\
    &\le \bigl(1-\prod_{i\in\Ic}\delta_i\bigr) \sum_{l=1}^n  H(X_l \cond Y^{l-1}_{\mathcal{J}}, U, S^{l-1})
    \label{eq:tmp821}
  \end{align}%
  where the first equality is from the chain rule; the second equality is because the current input does not
  depend on future states conditioned on the past outputs/states and $U$; the third one holds since $Y_{\mathcal{I},l}$ is deterministic and has entropy $0$
  when all outputs in $\mathcal{I}$ are erased~($S_l\cap\mathcal{I}=\emptyset$); the fourth equality is from
  the independence between $X_l$ and $S_l$; and we get the last inequality by removing the terms
  $Y^{l-1}_{\Ic\setminus\Jc}$ in the condition of
  the entropy. Following the same steps, we have 
  \begin{align}
    \MoveEqLeft[0]{H(Y^n_{\mathcal{J}} \cond U, S^n) = 
    \bigl(1-\prod_{i\in\Jc}\delta_i\bigr) \sum_{l=1}^n  H(X_l \cond Y^{l-1}_{\mathcal{J}}, U, S^{l-1})},
    %\label{eq:tmp822}
  \end{align}%
  from which and \eqref{eq:tmp821}, we obtain \eqref{eq:essential-erasure}.
\end{proof}

\begin{lemma} \label{lemma:decentralized}
Under the decentralized content placement
\cite{maddah2013decentralized}, the following
inequality hold for any $i$ and $\Kc\subseteq[K]$  
\[ H(W_i \cond \{Z_{k}\}_{k\in \Kc}) \ge
\prod_{k\in\Kc}\left(1-p_k\right) H(W_i).\]
 %as $F_i\rightarrow \infty$.
\end{lemma}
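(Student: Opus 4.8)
The plan is to make the placement randomness explicit, use that conditioning cannot increase entropy, and then average it out. Let $P$ collect all the index sets drawn in the decentralized placement: for every user $k$ and every file $i'$, $P$ records which $p_kF_{i'}$ packets of $W_{i'}$ are stored by user $k$. By construction $P$ is independent of $(W_1,\dots,W_N)$, and for a fixed packet of a fixed file the events ``cached by user $k$'' are independent across $k$, each of probability $p_k$.

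First I would invoke the fact that conditioning reduces entropy,
\begin{align}
H\bigl(W_i \,\big|\, \{Z_k\}_{k\in\Kc}\bigr) \;\ge\; H\bigl(W_i \,\big|\, \{Z_k\}_{k\in\Kc},\, P\bigr).
\end{align}
Next I would fix a realization $P=p$. Given $P=p$, each $Z_k$ is a deterministic function of $(W_1,\dots,W_N)$, and, as far as $W_i$ is concerned, $\{Z_k\}_{k\in\Kc}$ discloses precisely the packets of $W_i$ whose indices lie in $\Ac_p \defeq \{\, j\in[F_i] : \text{packet $j$ of $W_i$ is cached by some user in $\Kc$}\,\}$; the rest of the information in $\{Z_k\}_{k\in\Kc}$ depends only on the files $W_{i'}$ with $i'\ne i$, which are independent of $W_i$. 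Since $W_i$ is uniform over $\FF_q^{F_i}$, this gives, in packets,
\begin{align}
H\bigl(W_i \,\big|\, \{Z_k\}_{k\in\Kc},\, P=p\bigr) \;=\; F_i - |\Ac_p|.
\end{align}

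Finally I would average over $P$. By linearity of expectation together with the per-user independence recalled above, a given packet of $W_i$ escapes every cache indexed by $\Kc$ with probability exactly $\prod_{k\in\Kc}(1-p_k)$, hence $\EE\bigl[\,F_i - |\Ac_P|\,\bigr] = F_i \prod_{k\in\Kc}(1-p_k)$. Chaining the three displays and recalling $H(W_i)=F_i$ packets then yields $H(W_i \mid \{Z_k\}_{k\in\Kc}) \ge \prod_{k\in\Kc}(1-p_k)\,H(W_i)$; note that no $\epsilon_F$ intervenes, since only the expected number of uncached packets, which is exact, enters the argument. The one step deserving care is the ``deterministic function'' claim: once $P$ is frozen one must argue cleanly that $\{Z_k\}_{k\in\Kc}$ decomposes into a lossless re-encoding of a coordinate projection of $W_i$ plus a component that is a function of the other, mutually independent files only, so that at most $|\Ac_p|$ packets of $W_i$ are revealed; everything else is elementary probability. (Alternatively, one may work with the sub-file decomposition $W_i=\{\Lc_{\Jc}(W_i)\}_{\Jc\subseteq[K]}$, observe that $\{Z_k\}_{k\in\Kc}$ contains exactly those $\Lc_{\Jc}(W_i)$ with $\Jc\cap\Kc\ne\emptyset$, and sum the sizes \eqref{eq:subfileWk} of the remaining sub-files via $\sum_{\Jc\subseteq[K]\setminus\Kc}\prod_{j\in\Jc}p_j\prod_{j\in[K]\setminus\Jc}(1-p_j)=\prod_{k\in\Kc}(1-p_k)$; this reproduces the same bound up to the vanishing $\epsilon_F$ corrections.)
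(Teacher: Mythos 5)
Your proposal is correct and follows essentially the same route as the paper: the paper also conditions on the placement randomness (via the random index sets $\Lc_{\Jc}$, using $H(\Lc_{\Jc}(W_i)\cond\Lc_{\Jc})=\mathbb{E}\{|\Lc_{\Jc}|\}$) and then computes the expected number of unrevealed packets, which is exactly your argument phrased per sub-file rather than per packet. Your per-packet version via linearity of expectation is, if anything, marginally cleaner, since it sidesteps the paper's (inessential) claim that $|\Lc_{\Jc}|$ is exactly binomial.
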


\begin{proof}
Under the decentralized content placement, we have
\begin{align}
 H(W_i \cond \{Z_{k}\}_{k\in \Kc})& = H(W_i \cond  \{\Lc_{\Jc}(W_l) \}_{  \Jc\cap \Kc\ne \emptyset,\, l=1,\dots, N})\label{eq:tmp8921}\\
  & = H(W_i \cond  \{ \Lc_{\Jc}(W_i) \}_{  \Jc\cap \Kc\ne \emptyset} )\\
  & = H( \{ \Lc_{\Jc}(W_i)\}_{  \Jc\cap \Kc = \emptyset} ) \\
  & = \sum_{\Jc:\Jc\subseteq [K]\setminus\Kc} H( \Lc_{\Jc}(W_i) ) \label{eq:tmp892}\\
  & \ge \sum_{\Jc:\Jc\subseteq [K]\setminus\Kc} H( \Lc_{\Jc}(W_i) \cond \Lc_{\Jc}) 
\end{align}
where the first equality follows from \eqref{eq:Zk}; the second equality follows due to the independence between messages $W_1,\cdots,W_N$; the third equality follows by identifying the unknown parts of $W_i$ given
the cache memories of $\Kc$ and using the
independence of all sub-files; \eqref{eq:tmp892} is
again from the independence of the sub-files. Note
that $\Lc_\Jc$ is a random variable indicating which
subset of packets of file $W_i$ are shared by the
users in $\Jc$. The size of the random subset
$|\Lc_\Jc|$ follows thus the binomial distribution
$B\Bigl(H(W_i),\,
\prod_{j\in\Jc}p_j\prod_{k\in[K]\setminus\Jc}(1-p_k)\Bigr)$.
It is readily shown that $H( \Lc_{\Jc}(W_i) \cond
\Lc_{\Jc}) = \mathbb{E}\{|\Lc_{\Jc}|\}$. This
implies that  
\begin{align}
{H(W_i \cond \{Z_{k}\}_{k\in \Kc} )}
  &\ge \sum_{\Jc:\Jc\subseteq [K]\setminus\Kc} \prod_{j\in\Jc}p_j\prod_{k\in[K]\setminus\Jc}(1-p_k)H( W_{i} ) \label{eq:tmp893} \\
  & = \prod_{k\in\Kc}(1-p_k)\sum_{\Jc:\Jc\subseteq [K]\setminus\Kc } \prod_{j\in\Jc}p_j\prod_{k\in[K]\setminus\Kc\setminus\Jc}(1-p_k)H( W_{i} )  \\
  & = \prod_{k\in\Kc}\left(1-p_k\right)H(W_{i}) 
\end{align}
where the last inequality is obtained from the basic property that we have
$\sum_{\Jc\subseteq\Mc}\prod_{j\in\Jc}p_j\prod_{k\in\Mc\setminus\Jc}(1-p_k)=1$
for a subset $\Mc=[K]\setminus\Kc$. %\\
\end{proof} 

We apply genie-aided bounds to create a degraded erasure broadcast channel by
providing the messages, the channel outputs, as well as the receiver side
information (contents of cache memories) to the enhanced receivers. Without loss of generality, we focus on
the case without permutation and the demand $(d_1,\dots, d_K)=(1,\dots, K$). %Considering decentralized cache placement, we have for user $k$, 
\begin{align}
n\prod_{j=1}^k (1-p_j)R_k &= \prod_{j=1}^k(1-p_j) H(W_k) \\
 &\le H(W_k |Z^k S^n) \\
 &\leq  I(W_k;Y_{[k]}^n \cond Z^k S^n) + n \epsilon'_{n,k} \label{eq:tmp722}\\
 &\leq  I(W_k;Y_{[k]}^n, W^{k-1} \cond Z^k S^n) + n \epsilon'_{n,k} \\
 &=  I(W_k;Y_{[k]}^n \cond  W^{k-1} Z^k S^n) + n \epsilon'_{n,k} 
\end{align}
where the second inequality is by applying Lemma \ref{lemma:decentralized} and
noting that $S^n$ is independent of others; \eqref{eq:tmp722} is from 
Fano's inequality; the last equality is from $I(W_k; W^{k-1} \cond
Z^k S^n) = 0$ since the caches $Z^k$ only store
disjoint pieces of individual files by the
decentralized content placement \cite{maddah2013decentralized}. Putting all the
rate constraints together, and defining
$\epsilon_{n,k} \defeq
\epsilon'_{n,k}/\prod_{j=1}^k(1-p_j)$, we have
\begin{align}
  n (1-p_1)(R_1 - \epsilon_{n,1}) &\leq   H(Y^n_1 \cond Z_1 S^n) - H(Y^n_1\cond
  W_1  Z_1 S^n) \nonumber \\
%n (R_2 - \epsilon_n) &\le I(W_2;Y^n_1 Y^n_2 \cond W_1 Z_1 Z_2 S^n) \\
%&= H(Y^n_1Y^n_2 \cond W_1 Z_1 Z_2) - H(Y^n_1\cond W_1 W_2  Z_1 Z_2S^n)\\
&\ \, \vdots \nonumber \\
n \prod_{j=1}^K(1-p_j) (R_K - \epsilon_{n,K}) &\le H(Y^n_{[K]} \cond W^{K-1} Z^{K}  S^n) - H(Y^n_{[K]}\cond W^{K} Z^{K} S^n).%\nonumber \\
%&\qquad - H(Y^n_{[K]}\cond W^{K} Z^{K} S^n).
\end{align}%
We now sum up the above inequalities with different weights, and apply $K-1$ times
Lemma~\ref{lemma:erasure-ineq}, namely, for $k=1,\ldots,K-1$,
\begin{align}
  \frac{H(Y^n_{[k+1]} \cond W^{k} Z^{k+1} S^n) }{1-\prod_{j\in[k+1]}\delta_{j}}  &\le  \frac{H(Y^n_{[k+1]} \cond W^{k} Z^{k} S^n)}{1-\prod_{j\in [k+1]}\delta_{j}}  \\ 
  &\le \frac{H(Y^n_{[k]} \cond W^{k} Z^{k} S^n)}{1-\prod_{j\in[k]} \delta_{j}}, %, \;\; \forall k=1,\dots, K-1
\end{align}%
where the first inequality follows because removing conditioning increases entropy. Finally, we have 
\begin{align}
%\Longrightarrow \\
\MoveEqLeft{\sum_{k=1}^{K} \frac{\prod_{j\in[k]}(1-p_j)}{1-\prod_{j\in[k]}\delta_{j}} (R_k - \epsilon_n)}
\nonumber \\
&\le \frac{H(Y^n_1  \cond Z_1 S^n)}{n(1-\delta_1)} - \frac{H(Y^n_{[K]} \cond W^{K} Z^K S^n)}{n(1-\prod_{j\in[k]}\delta_{j})}
\\
&\leq \frac{H(Y^n_1) }{n(1-\delta_1)}\le1
\end{align}%
which establishes the converse proof. 

%%%%%%%%%
\section{Broadcasting without receiver side information} \label{section:Revisiting}
In this section, we first revisit the algorithm 
proposed in
\cite{wang2012capacity,gatzianas2013multiuser} achieving the capacity region of the EBC
with state feedback for some cases of interest, as an
important building block of our proposed scheme. Then, we provide an alternative achievability proof for the symmetric channel with uniform erasure probabilities across users.

\subsection{Revisiting the algorithm by Wang and Gatzianas et al. }

We recall the capacity region of the EBC with state
feedback as below.
\begin{theorem}[\!\cite{wang2012capacity,gatzianas2013multiuser}] \label{theorem:EBC}
For $K\leq 3$, or for the symmetric channel with $K\geq3$, or for the one-sided fair rate vector\footnote{$\delta_k  \geq \delta_j$ implies $\delta_k R_k \geq \delta_jR_j$
for any $k\neq j$.} with $K>3$, the capacity region of the erasure broadcast channel with state feedback is given by
\begin{align} \label{eq:asymmetric2}
\sum_{k=1}^{K}\frac{1}{1-\prod_{j=1}^k \delta_{\pi_j}} R_{\pi_k}\leq 1, ~~\forall \pi.
\end{align}
\end{theorem}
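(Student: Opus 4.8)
For the converse I would simply specialize the genie-aided argument of Section~\ref{section:UpperBound}: putting $p_k=0$ for all $k$ makes every cache $Z_k$ empty, so Lemma~\ref{lemma:decentralized} collapses to $H(W_i\cond\{Z_k\}_{k\in\Kc})=H(W_i)$ and each $Z_k$ can be deleted from the chain of inequalities (using only $W_i\perp S^n$). What survives is exactly $\sum_{k}\frac{R_{\pi_k}}{1-\prod_{i=1}^{k}\delta_{\pi_i}}\le 1$, and applying it to every relabelling $\pi$ of the users gives the full outer region \eqref{eq:asymmetric2}. So the real work is achievability, where I would invoke the multi-phase broadcasting scheme of \cite{wang2012capacity,gatzianas2013multiuser}, recalling it at the level of detail needed for the rate computation.

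That scheme runs in phases $j=1,\dots,K$. Entering phase $j$ the transmitter holds, for every pair of disjoint sets $(\Wc,\Oc)$ with $|\Wc|+|\Oc|=j$, a queue of coded packets still wanted by all users in $\Wc$ and already known (as overheard side information) to all users in $\Oc$; in phase $1$ these are just the raw requested packets, so user $k$'s phase-$1$ queue holds $nR_k$ packets. In each slot of phase $j$ the transmitter XORs a matched family of such tokens into one symbol simultaneously innovative for a chosen set of $j$ users and broadcasts it, then reads from the feedback which users received it: a symbol received by all $j$ of its targets is cleared, while otherwise the residual re-enters the queues, possibly as a higher-order token served in a subsequent phase. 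Writing $n_j$ for the length of phase $j$, a law-of-large-numbers argument shows that $n_j$ concentrates, as $n\to\infty$, around a value fixed by the number of symbols sent in phase $j$ and by the erasure probabilities of the at most $j$ users each symbol is addressed to, while a linear recursion expresses the phase-$(j+1)$ token counts in terms of the phase-$j$ ones. Summing $n=\sum_{j=1}^{K}n_j$ and reorganizing the result with a short algebraic identity should reproduce the boundary $\sum_{k}\frac{R_{\pi_k}}{1-\prod_{i=1}^{k}\delta_{\pi_i}}=1$.

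The delicate point — and the reason for the hypotheses $K\le 3$, symmetric network, or one-sided-fair rates with $K>3$ — is that the XOR step in phase $j$ can only combine tokens with complementary wanted/known patterns, so it draws from several queues at once; the scheme reaches the entire polyhedron (not just a subset of its vertices) only when those queues drain compatibly throughout, and establishing this queue-balancing under the stated conditions is where I expect the main obstacle to lie. For the symmetric network I would instead give a cleaner self-contained argument: by symmetry the region is the single constraint $R\sum_{k=1}^{K}\frac{1}{1-\delta^{k}}\le 1$, so it suffices to pin down the sum rate, which I would do by introducing the order-$j$ capacity $C_j$ (the largest aggregate rate of messages each destined to a distinct $j$-subset of users, with the side information those users naturally possess), evaluating $C_j$ by reducing to a $j$-user broadcast-with-feedback problem, and then assembling the sum capacity from $C_1,\dots,C_K$ in the style of \cite{maddah2010degrees}. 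This also sets up the cache extension, since seeding the phase-$1$ queues with the sub-files delivered during placement turns \eqref{eq:asymmetric2} into \eqref{eq:symmetric}, as in \cite{allerton2015}.
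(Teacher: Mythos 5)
Your plan matches the paper's treatment: the converse is obtained by setting $p_k=0$ in the genie-aided argument of Section~\ref{section:UpperBound} (Lemma~\ref{lemma:decentralized} becoming vacuous), achievability in the general cases is delegated to the multi-phase scheme of \cite{wang2012capacity,gatzianas2013multiuser} exactly as the paper does (Theorem~\ref{theorem:EBC} is stated as a cited result, with the scheme only recalled and the length computation deferred to the $p_k=0$ specialization of subsection~\ref{subsection:proof1}), and your self-contained symmetric-case argument via order-$j$ capacities is precisely Theorem~\ref{theorem:order} and Corollary~\ref{cor:Sheng}. The one nuance worth noting is that even for the symmetric channel the region is cut out by $K!$ permuted constraints rather than a single one, so you must first reduce to the symmetric-rate vertices of the polyhedron (as in \cite[Section V]{maddah2010degrees}) before it suffices to pin down the sum rate — which is exactly the reduction performed in subsection~\ref{subsection:proof2}.
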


We provide a high-level description of the broadcasting scheme \cite{wang2012capacity,gatzianas2013multiuser} which is optimal under the special cases as specified in the above theorem. 
%along the line of \cite{maddah2010degrees}  
We recall that the number of private packets $\{F_{k}\}$ is assumed to be arbitrarily large so
that the length of each phase becomes deterministic. Thus, we drop the $\epsilon_F$ term
wherever confusion is not probable. The broadcasting algorithm has two main roles: ~1) broadcast
new information packets and ~2)~multicast side information or overheard packets based on state feedback. 
Therefore, we can call phase 1 {\it broadcasting phase} and phases $2$ to $K$ {\it multicasting phase}. Phase $j$ consists of $K \choose j$ sub-phases in each of which the transmitter sends packets intended to a subset of users $\Jc$ for $|\Jc|=j$. Similarly to
the receiver side information obtained after the placement phase, we let $\Lc_{\Jc} (V_{\Kc})$
denote the part of packet $V_{\Kc}$ received by users in $\Jc$ and erased at users in $[K]\setminus \Jc$. 
%In the following we define a set of variables for $k\in\Ic\subseteq\Jc$:

Here is a high-level description of the broadcasting algorithm:
\begin{enumerate}
	\item Broadcasting phase (phase $1$): send each message $V_k= W_k$ of $F_k$ packets sequentially for $k=1,\dots, K$. 
	This phase generates overheard symbols $\{\Lc_{\Jc} (V_k)\}$ to be transmitted via linear combination in multicasting phase, where $ \Jc \subseteq [K] \setminus k$ for all $k$. 
	\item Multicasting phase (phases $2-K$): for a subset $\Jc$ of users, generate $V_{\Jc}$ as a linear combination of overheard packets such that
	\begin{align}%\label{eq:V_J}
		V_{\Jc} = \Fc_{\Jc}\left(\{\Lc_{\Jc \setminus \Ic \cup \Ic'} (V_{\Ic})\} _{\Ic' \Ic :  \Ic' \subset \Ic \subset \Jc} \right),
	\end{align}
	where $\Fc_{\Jc}$ denotes a linear function. Send $V_{\Jc}$ sequentially for all $\Jc \subseteq [K]$ of the cardinality $|\Jc|=2,\dots, K$.
\end{enumerate}
 
The achievability result of Theorem \ref{theorem:EBC} implies the following corollary.
\begin{cor}\label{lemma:GGT}
For $K\leq 3$, or for the symmetric channel with $K>3$, or for the one-sided fair rate vector
with $K>3$,  the total transmission length to convey $W_1,\dots, W_K$ to users $1,\dots, K$,
respectively, is given by   
  \[ T_{\rm tot}=\sum_{k=1}^K \frac{F_{\pi_k} }{1-\prod_{j=1}^k \delta_{\pi_j}}+\epsilon_{F}. \]
  %as $F_k\rightarrow \infty$ for all $k$. 
\end{cor}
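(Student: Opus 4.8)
The plan is to derive Corollary~\ref{lemma:GGT} directly from the rate region of Theorem~\ref{theorem:EBC} by a standard time-sharing/normalization argument. Recall that a rate tuple $(R_1,\dots,R_K)$ is achievable if and only if $R_k < F_k/n$ for all $k$ for some caching (here, broadcasting) strategy of blocklength $n$; so minimizing the transmission length $n$ to deliver files of sizes $F_1,\dots,F_K$ is the dual of maximizing the rates. Concretely, I would fix the demand profile and set $R_k = F_k / T_{\rm tot}$; then $T_{\rm tot}$ is the smallest value of $n$ for which the scaled tuple $(F_1/n,\dots,F_K/n)$ lies in the capacity region \eqref{eq:asymmetric2}. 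Plugging $R_{\pi_k}=F_{\pi_k}/T_{\rm tot}$ into each of the $K!$ inequalities of \eqref{eq:asymmetric2} and solving for $T_{\rm tot}$ gives
\[
T_{\rm tot} = \max_{\pi} \sum_{k=1}^{K} \frac{F_{\pi_k}}{1-\prod_{j=1}^k \delta_{\pi_j}},
\]
and the $\epsilon_F$ term accounts for the fact that the phase lengths are only deterministic up to lower-order fluctuations as $F\to\infty$ (law of large numbers on the erasure counts). This is the claimed expression once one argues that the maximizing permutation is in fact the identity under the stated orderings, or simply retains the $\max_\pi$ as written.

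The key steps, in order, are: (1)~invoke Theorem~\ref{theorem:EBC} to get that \eqref{eq:asymmetric2} is exactly the achievable region under the stated regimes ($K\le 3$, symmetric channel, or one-sided fair); (2)~observe that the delivery problem for fixed file sizes $F_1,\dots,F_K$ is equivalent, after normalizing $R_k=F_k/n$, to finding the minimal $n$ with the normalized point in the region; (3)~solve the resulting system of $K!$ linear inequalities in the single unknown $T_{\rm tot}$, which is immediate since each inequality is of the form $\sum_k c_k^{(\pi)} F_{\pi_k} \le T_{\rm tot}$ and the binding one is the maximum over $\pi$; (4)~handle the discretization/LLN gap by the $\epsilon_F$ term, using the assumption that $F\to\infty$ with $F_i/F \to \tilde F_i > 0$, so that all phase durations concentrate around their expectations. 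The converse direction (that no scheme does better) is inherited from the converse half of Theorem~\ref{theorem:EBC}, since any shorter blocklength would violate one of the region inequalities.

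The only genuinely delicate point is the equivalence in step~(2): Theorem~\ref{theorem:EBC} is stated for an abstract rate region with an asymptotically vanishing error probability and blocklength $n\to\infty$, whereas Corollary~\ref{lemma:GGT} speaks of a concrete transmission length for finite (but large) files. The bridge is the standard observation that the broadcasting algorithm of Wang and Gatzianas et~al.\ operates phase by phase with phase lengths proportional to the amounts of data to be conveyed and the erasure statistics, so its total length is exactly the claimed sum up to $\epsilon_F$; this is really a restatement of how the achievability in Theorem~\ref{theorem:EBC} is proved rather than a new argument. I expect the main obstacle to be purely expository—making the normalization $R_k = F_k/T_{\rm tot}$ and the role of $\epsilon_F$ precise—rather than mathematical, since all the heavy lifting is already contained in Theorem~\ref{theorem:EBC}.
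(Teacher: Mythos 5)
Your argument is correct, but it takes a genuinely different route from the paper. You treat the corollary as a formal consequence of Theorem~\ref{theorem:EBC}: normalize $R_k=F_k/T_{\rm tot}$, note that the minimal feasible $T_{\rm tot}$ is the largest of the $K!$ weighted sums in \eqref{eq:asymmetric2}, and inherit both the converse and the achievability from the capacity theorem. The paper instead proves the corollary constructively: it defers to the computation of subsection~\ref{subsection:proof1} specialized to $p_k=0$, i.e., it derives closed forms for the per-user sub-phase lengths $t_{\Jc}^{\{k\}}$ (\eqref{eq:SumSubphase} and \eqref{eq:SubphaseLength}), shows via \eqref{eq:PiOrder} that under one-sided fairness the bottleneck user of every sub-phase $\Jc$ is $\min\Jc$, and telescopes $\sum_{\Jc}\max_{k\in\Jc}t_{\Jc}^{\{k\}}$ into the claimed sum. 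Your approach is shorter and logically sound given that Theorem~\ref{theorem:EBC} is quoted from the literature, and you correctly isolate the one delicate point (turning an asymptotic rate-region statement into a concrete transmission length, absorbed into $\epsilon_F$). What it does not deliver is the sub-phase bookkeeping itself---the quantities $t_{\Jc}^{\{k\}}$ and $N_{\Ic\rightarrow\Jc}^{\{k\}}$ and the identification of the worst user per sub-phase---which is exactly the machinery the paper needs to reuse, with $p_k>0$, for the cache-enabled achievability; that is why the paper routes the proof through the explicit computation rather than through normalization. One small point worth making explicit in your step~(3): the achievability half silently requires the normalized boundary point $(F_k/T_{\rm tot})_k$ to satisfy the one-sided fairness condition when $K>3$; this is precisely the hypothesis of the corollary, so it is not a gap, but it is where the hypothesis enters and should be stated.
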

The proof is omitted because the proof in section \ref{subsection:proof1} covers the case without user memories. 

\begin{table}[ht]
\caption{Notations for the erasure broadcast channel.}
\label{tab:1}
\begin{center}
\begin{tabular}{cc}\hline
$ R_k$ &  Message rate for user $ k $\\
$t_{\Jc}$ & Length of sub-phase $\Jc$\\
$ t_{\Jc}^{\{k\}} $ & Length needed by user $k$ for sub-phase $\Jc$ \\
$V_{\Kc}$ & Packets intended to users in $\Kc $ \\
$\Lc_{\Jc} (V_{\Kc})$ & Part of packets $V_{\Kc}$ received by users in $\Jc$ and erased at users
in $[K]\setminus \Jc$\\
$ N_{\Ic\rightarrow \Jc}^{\{k\}}$ & Number of packets useful for user $k$ generated in sub-phase $\Ic$ and to be sent in sub-phase $\Jc$ \\		
\hline
\end{tabular}
\end{center}
\end{table}
In order to calculate the total transmission length of the algorithm, we need to introduce
further some notations and parameters~(Table~\ref{tab:1}) which are explained as follows. 
\begin{itemize}
\item A packet intended to $\Jc$ is consumed {\it for a given user $k\in\Jc$}, if this user or at least one user in $[K]\setminus \Jc$ receives it. The probability
of such event is equal to $1- \prod_{j\in [K]\setminus\Jc\cup\{k\} } \delta_j$.
\item A packet intended to $\Ic$ becomes a packet intended to $\Jc$ and useful for user
  $k\in\Ic\subset\Jc\subseteq[K]$, if erased at user $k$ and all users in $[K]\setminus \Jc$ but received by $\Jc \setminus \Ic$. %The probability of such event is denoted by $\alpha^{\{k\}}_{\Ic \rightarrow \Jc} = \prod_{j'\in  [K]\setminus\Jc\cup\{k\}}\delta_{j'} \prod_{j \in \Jc \setminus \Ic}  (1-\delta_j)$.
The number of packets useful for user $k$ generated in sub-phase $\Ic$ and to be sent in sub-phase $\Jc$, denoted by $N^{\{k\}}_{\Ic\rightarrow \Jc}$, is then given by
\begin{align}\label{eq:Nij}
N^{\{k\}}_{\Ic\rightarrow \Jc}=t_{\Ic}^{\{k\}} \prod_{j'\in  [K]\setminus\Jc\cup\{k\}}\delta_{j'} \prod_{j \in \Jc \setminus \Ic}  (1-\delta_j) %\alpha^{\{k\}}_{\Ic\rightarrow \Jc},
%\frac{N^{\{k\}}_{\Ic}}{\beta^{\{k\}}_{\Ic}} \alpha^{\{k\}}_{\Ic\rightarrow \Jc},
\end{align}
where $t_{\Ic}^{\{k\}} $ denotes the length of sub-phase $\Ic$ viewed by user $k$ to be defined shortly. We can also express $N^{\{k\}}_{\Ic\rightarrow \Jc}$ as 
 \begin{align}\label{eq:NkIJ}
 N^{\{k\}}_{\Ic\rightarrow \Jc} = \sum_{\Ic'\subseteq \Ic\setminus k} |\Lc_{\Jc \setminus \Ic \cup \Ic'} (V^{\{k\}}_{\Ic})|,
 \end{align}
 where we let $V^{\{k\}}_{\Ic}$ denotes the part of $V_{\Ic}$ required for user $k$.
\item The duration $t_{\Jc}$ of sub-phase $\Jc$ is given by 
\begin{align}\label{eq:t_J}
t_{\Jc}=\max_{k\in\Jc}t_{\Jc}^{\{k\}} ,
\end{align}
 where 
\begin{align}\label{eq:t^k_J}
t_{\Jc}^{\{k\}}=\frac{\sum_{k\in\Ic\subset \Jc}  N^{\{k\}}_{\Ic\rightarrow \Jc}}{1- \prod_{j\in [K]\setminus\Jc\cup\{k\} } \delta_j}.
%\frac{N^{\{k\}}_{\Jc}}{\beta^{\{k\}}_{\Jc}}.
\end{align}
 \end{itemize}
The total transmission length is given by summing up all sub-phases, i.e. $T_{\rm tot} = \sum_{\Jc\subseteq[K]}  t_{\Jc}$. 

\begin{figure}
\vspace{-10pt}
\begin{center}
\includegraphics[width=0.45\textwidth,clip=]{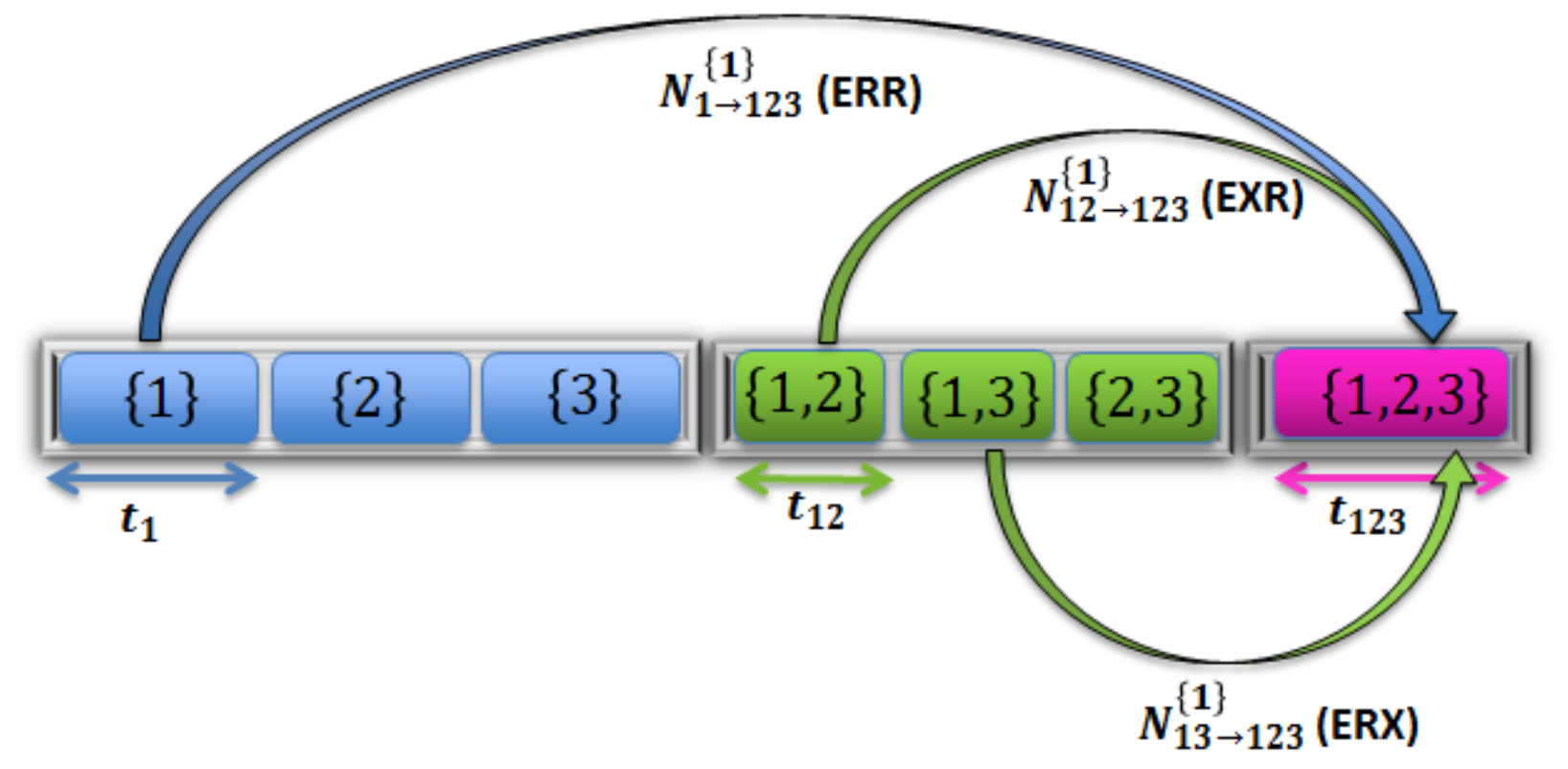}
\vspace{-2pt}
\caption{Phase organization for $K=3$ and packet evolution viewed by user 1.}
\label{fig:phase3}
\end{center}
\vspace{-10pt}
\end{figure}

Fig. \ref{fig:phase3} illustrates the phase organization for $K=3$ and the packet evolution viewed by user 1. The packets intended to $\{1,2,3\}$ are created from both phases 1 and 2. 
More precisely, sub-phase $\{1\}$ creates $\Lc_{23}(V_1)$ to be sent in phase 3 if erased at
user 1 and received by others (ERR). The number of such packets is $N^{\{1\}}_{1\rightarrow
123}$. Sub-phase $\{1,2\}$ creates $\Lc_{3}(V_{12}), \Lc_{23}(V_{12})$ if erased at user 1 but received by user 3 (EXR), while sub-phase $\{1,3\}$ creates $\Lc_{2}(V_{13}), \Lc_{23}(V_{13})$ if 
erased at user 1 and received by user 2 (ERX). 
The total number of packets intended to $\{1,2,3\}$ generated in phase 2 and required by user $1$ is 
%$|\Lc_{3}(V_{12})| + |\Lc_{13}(V_{12})|+|\Lc_{2}(V_{13})|+ |\Lc_{23}(V_{13})|$. 
$N^{\{1\}}_{12\rightarrow 123}+N^{\{1\}}_{13\rightarrow 123}$.

\subsection{Achievability in the symmetric channel}
We focus now on the special case of the symmetric channel with uniform erasure probabilities, i.e. $\delta_k=\delta$ for all $k$. In this case, the capacity region of the EBC with state feedback in \eqref{eq:asymmetric2} simplifies to 
\begin{align}\label{eq:regionEBC}
\sum_{k=1}^K \frac{1}{1-\delta^k} R_{\pi_k}\leq 1, ~~\forall \pi.
\end{align}
It readily follows that the capacity region yields the symmetric capacity, i.e. $R_1=\dots=R_K=R_{\rm sym}(K)$, given by 
\begin{align}
R_{\rm sym}(K) = \frac{1}{\sum_{k=1}^K \frac{1}{1-\delta^k}}.
\end{align} 
In the following, we provide an alternative proof of the achievability of the symmetric capacity.
Notice that other vertices of the capacity region can be characterized similarly as proved in subsection \ref{subsection:proof2}.
Our proof follows the footsteps of \cite{maddah2010degrees} and uses the notion of order-$j$ packets. Let us define message set $\{W_{\Jc}\}$
independently and uniformly distributed over $\{\Wc_{\Jc}\} $ for all $\Jc\subseteq [K]$. For $\Jc$ with the cardinality $j=|\Jc|$, the message set $\{W_{\Jc}\}$ are called order-$j$ messages. We define $R_{\Jc}$ an achievable rate of the message $W_{\Jc}$ and define the sum rate of order-$j$ messages as
\begin{align}
R^{j}(K) \defeq \sum_{\Jc:|\Jc|=j}R_{\Jc}={K \choose j}R_{\Jc}.
\end{align}
The supremum of $R^j(K)$ is called the sum capacity of order-$j$ messages. We characterize the sum capacity of order-$j$ messages,  in the erasure broadcast channel with state feedback in the following theorem. 

\begin{theorem} \label{theorem:order}
	In the $K$-user erasure broadcast channel with state feedback, the sum capacity of order-$j$ packets is upper bounded by
	\begin{align}\label{eq:Rorder}
	R^j (K) &\leq   \frac{{K \choose j}}{\sum_{k=1}^{K-j+1} \frac{ {K-k \choose j-1}}{1-\delta^k}} ,\;\; j=1,\dots, K.
	\end{align}
The algorithms in \cite{wang2012capacity,gatzianas2013multiuser} achieve the RHS with equality. 
\end{theorem}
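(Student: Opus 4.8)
The plan is to prove the upper bound \eqref{eq:Rorder} by a genie-aided argument that mirrors the converse of Theorem~\ref{theorem:region}, specialized to the symmetric channel and to a message structure that has only order-$j$ messages. First I would set up the enhanced (degraded) broadcast channel: fix a nested chain of user subsets $\emptyset = \Kc_0 \subset \Kc_1 \subset \cdots \subset \Kc_{K}$ with $|\Kc_k| = k$, say $\Kc_k = [k]$ after relabeling, and give to the $k$-th virtual receiver the outputs $Y_{[k]}^n$, the state sequence $S^n$, and all order-$j$ messages that are ``already known'' in the sense of not being intended solely to users outside $[k]$. Concretely, for the rate bound on the messages $W_\Jc$ with $\min \Jc = k$ (i.e. the lowest-indexed intended user is $k$), one applies Fano's inequality to get $n R_\Jc \le I(W_\Jc; Y_{[k]}^n \mid \text{(genie side info)}) + n\epsilon_n$, then upgrades the conditioning to include all messages $W_{\Jc'}$ with $\min \Jc' < k$, which costs nothing because the messages are independent.

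The second step is to aggregate these single-message bounds into telescoping entropy differences and apply Lemma~\ref{lemma:erasure-ineq}. Grouping all order-$j$ messages whose minimum index equals $k$, their total rate $\binom{K-k}{j-1} R_\Jc$ (there are $\binom{K-k}{j-1}$ such subsets) gets bounded by $H(Y_{[k]}^n \mid \cdot\,) - H(Y_{[k]}^n \mid W_{(\text{that group})}, \cdot\,)$. Here the key point, exactly as in the converse of Theorem~\ref{theorem:region}, is that after summing over $k = 1, \dots, K-j+1$ with weights $\frac{1}{1-\delta^k}$ and invoking Lemma~\ref{lemma:erasure-ineq} (with $\delta_i = \delta$, so $1 - \prod_{i\in[k]}\delta_i = 1 - \delta^k$) to relate $H(Y_{[k+1]}^n \mid \cdots)$ to $H(Y_{[k]}^n \mid \cdots)$, the intermediate conditional-entropy terms cancel in a telescoping fashion. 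One is left with $\frac{1}{1-\delta}H(Y_1^n \mid S^n)$ on top, which is at most $\frac{n}{1-\delta}$, and a nonnegative leftover entropy at the bottom, yielding
\begin{align}
\sum_{k=1}^{K-j+1} \frac{\binom{K-k}{j-1}}{1-\delta^k}\, R_\Jc \le 1.
\end{align}
Multiplying by $\binom{K}{j}$ and using $R^j(K) = \binom{K}{j} R_\Jc$ gives \eqref{eq:Rorder}.

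The achievability half is, as the theorem statement indicates, inherited from the algorithms of \cite{wang2012capacity,gatzianas2013multiuser}: one runs the multi-phase scheme starting the ``clock'' at phase $j$ with the order-$j$ messages playing the role of the packets freshly created at the start of sub-phase $\Jc$ for $|\Jc| = j$, and then tracks the phase durations $t_\Jc$ through phases $j, j+1, \dots, K$ using \eqref{eq:Nij}–\eqref{eq:t^k_J} with $\delta_k \equiv \delta$. By symmetry all sub-phases of a given cardinality have equal length, and summing $\sum_{\ell=j}^K \binom{K}{\ell} t_{(\ell)}$ reproduces the reciprocal of the RHS of \eqref{eq:Rorder}; I would either carry out this bookkeeping directly or, more cleanly, note it is the specialization of Corollary~\ref{lemma:GGT} / the general length formula to a single order class. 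The main obstacle I anticipate is purely organizational rather than conceptual: getting the genie's side information defined so that (a) the independence argument removing $I(W_\Jc; \text{lower-order-index messages}) = 0$ is clean, and (b) the telescoping after Lemma~\ref{lemma:erasure-ineq} lines up with exactly the binomial weights $\binom{K-k}{j-1}$ — in particular one must be careful that for order-$j$ messages no intended receiver has index exceeding $K-j+1$ can be the \emph{minimum}, which is why the outer sum truncates at $k = K-j+1$.
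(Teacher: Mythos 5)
Your proposal follows essentially the same route as the paper: a genie-aided degraded-channel converse in which order-$j$ messages are grouped by their minimum intended index $k$ (giving the $\binom{K-k}{j-1}$ counts), telescoped via Lemma~\ref{lemma:erasure-ineq} with weights $\frac{1}{1-\delta^k}$, together with achievability by running the Wang--Gatzianas scheme from phase $j$ and summing the sub-phase lengths. The only substantive piece you defer is the combinatorial bookkeeping showing that $\sum_{\ell=j}^{K}\binom{K}{\ell}t^{j}_{\ell}$ equals the reciprocal of the claimed rate (the paper does this via the recursion for $U^i_j$); also note the normalization should be $H(Y_1^n\cond S^n)\le n(1-\delta)$ rather than $n$, so the top term is bounded by $n$, not $\frac{n}{1-\delta}$.
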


\begin{proof}
We first provide the converse proof. Similarly to section \ref{section:UpperBound}, we build on genie-aided bounds together with Lemma \ref{lemma:erasure-ineq}. 
Let us assume that the transmitter wishes to convey the message $W_{\Jc}$ to a subset of users $\Jc \subseteq \{1,\dots, K\}$, and receiver $k$ wishes to decode all messages $\tilde{W}_k\eqdef \{W_{\Jc}\}_{\Jc: \Jc \ni k}$ for $j=1, \dots, K$. 
In order to create a degraded broadcast channel, we assume that receiver $k$ provides the message set $\tilde{W}_k$ and the channel output $Y_k^{n}$ to receivers $k+1$ to $K$ for $k=1,\dots, K-1$, Under this setting and using Fano's inequality, we have for receiver 1 :
\begin{align}
n \left(\sum_{1 \in \Jc \subseteq [K] }R_{\Jc}-\epsilon_{n,1} \right) &\leq  H(Y_1^n|S^n) -
H(Y_1^n |\tilde{W}_1S^n). 
\end{align}
For receiver $k=2, \dots, K$, we have: 
\begin{align}
n \left( \sum_{k \in \Jc \subseteq \{k,\dots, K\}} R_{\Jc}-\epsilon_{n,k} \right)  \leq  H(Y_1^n
\dots Y_k^n|\tilde{W}^{k-1} S^n)- H(Y_1^n \dots Y_k^n|\tilde{W}^{k} S^n),
\end{align}
where we used $\tilde{W}_k\setminus  \tilde{W}^{k-1}  = \{W_{\Jc}\}_{\Jc:\Jc \setminus \{k,\dots, K\}}$ in the LHS.
Summing up the above inequalities and applying Lemma \ref{lemma:erasure-ineq} $K-1$ times, we readily obtain:
\begin{align}\label{eq:upperbound}
\sum_{k=1}^K \frac{\sum_{k \in \Jc \subseteq \{k,\dots, K\}} (R_{\Jc}-\epsilon_{n,k})}{1-\delta^k} &\leq \frac{H(Y_1^n|S^n)}{n(1-\delta)} \\
& \leq 1.
\end{align}
We further impose the symmetric rate condition such that $R_{\Jc} = R_{\Jc'}$ for any $\Jc \neq \Jc'$ with the same cardinality. By focusing on $\Jc$ of the same cardinality $j$ in \eqref{eq:upperbound} and noticing that there are ${K-k \choose j-1}$ such subset,  $R_{\Jc}$ is upper bounded by  
\begin{align}
R_{\Jc}\leq \frac{1}{\sum_{k=1}^{K-j+1} \frac{ {K-k \choose j-1}}{1-\delta^k}}, \quad \forall\,\Jc,~|\Jc|=j. 
\end{align}
This establishes the converse part. \\

In order to prove the achievability of $R^i(K)$ in Theorem \ref{theorem:order}, we apply the broadcasting algorithm of~\cite{wang2012capacity,gatzianas2013multiuser} from phase $i>1$ by sending $N_i$ packets to each subset 
$\Ic \subseteq [K]$ with $|\Ic|=i$. First, we redefine some parameters by taking into account the symmetry across users as summarized in Table \ref{tab:2}.
Due to the symmetry, we drop the user index $k$ in $ t_{\Jc}^{\{k\}} $, $ N_{\Ic\rightarrow \Jc}^{\{k\}}$ and replace them by $t_j$, $N_{i\rightarrow j}$, respectively for $\Ic\subset \Jc \subseteq [K]$ with $|\Ic|=i, |\Jc|=j$. Now, we introduce 
variants of these notations to reflect the fact that the algorithm starts from phase $i>1$, rather than from phase 1. 
The length of any sub-phase in phase $j$ when starting the algorithm from phase $i$, denoted by $t^i_j$, is given by 
\begin{align}\label{eq:t^i_j}
t^i_j = \frac{1}{1-\delta^{K-j+1}} \sum_{l=i}^{j-1} {j-1 \choose l-1}N^i_{l\rightarrow j}, \;\; j> i,
\end{align}
where 
\begin{align}\label{eq:lengthN^i}
N^i_{l\rightarrow j} = t_l^i \delta^{K-j+1}(1-\delta)^{j-l}
\end{align}
 denotes the number of order-$j$ packets generated during a given sub-phase in phase $i$, again starting from phase $i$. 
 \begin{table}[t]
\caption{Notations for the symmetric channel.}
\label{tab:2}
\begin{center}
\begin{tabular}{cc}\hline
$W_{\Jc}$ & Message intended to users in $\Jc$ \\ 
$ R_{\Jc} $ & Rate of $W_{\Jc}$ \\ 
$ R^{j}(K)$ & Sum rate of order-$j$ messages  \\ 
%$R_{\rm sym}(K)$ & the symmetrical rate \\	\hline
$t_j = t_j^1$ & Length of any sub-phase in phase $j$ \\  
%$N_{i\rightarrow j}$ & \\ \hline
 %$\alpha_{i\rightarrow j}$
$ t_{j}^{i} $ & Length of any sub-phase $j$ when starting from phase $i$ \\ 
$ N_{i\rightarrow j}=N^{1}_{i\rightarrow j}$ & Number of packets created in sub-phase $\Ic$ and to be sent in sub-phase $\Jc$ for any $\Ic\subset \Jc$ of cardinality $i<j$ \\		
 $N^{i'}_{i\rightarrow j}$ & $N_{i\rightarrow j}$ when starting from phase $i'$ for $i\leq i' \leq j$\\ \hline
\end{tabular}
\end{center}	
\end{table}
 
 For $j=i$, we have 
\begin{align}\label{eq:t^i_i}
t^i_i = \frac{N_i}{1-\delta^{K-i+1}}.
\end{align}
By counting the total number of order-$i$ packets and the transmission length from phase $i$ to phase $K$, the sum rate of order-$i$ messages achieved by the algorithm \cite{wang2012capacity,gatzianas2013multiuser} is given by  
\begin{align}\label{eq:Rggl}
%R^{i}_{\rm GGT}(K)=\frac{{K \choose i}N_i}{\sum_{j=i}^{K}{K \choose j}t^{i}_j (N_i)}~~~~\forall i.
\tilde{R}^{i}(K)=\frac{{K \choose i}N_i}{\sum_{j=i}^{K}{K \choose j}t^{i}_j }, \quad\forall\,i.
\end{align} 
It remains to prove that $\tilde{R}^{i}(K)$ coincides with the RHS expression of \eqref{eq:Rorder}. 
We notice that the transmission length from phase $j$ to $K$ can be expressed in the following different way, i.e. 
\begin{align}\label{eq:e65}
\sum_{j=i}^{K}{K \choose j}t^{i}_j= \sum_{j=i}^{K}U_{j}^{i},
\end{align} 
where we let
\begin{align}
U_{j}^{i}=\sum_{l=i}^{j}{j-1 \choose l-1}t^{i}_l,\quad \forall\,j\geq i. 
\end{align}
By following similar steps as \cite[Appendix C]{gatzianas2013multiuser}, we obtain the recursive equations given by 
\begin{align}\label{eq:U^i_j}
U_{j}^{i}&=\frac{1}{1-\delta^{K-j+1}}\sum_{l=1}^{j-i}{j-1 \choose l}(-1)^{l+1}(1-\delta^{K-j+l+1})U_{j-l}^{i} 
\end{align}
for $j >i$.
Since we have $U_i^i=t^i_i=\frac{N_i}{1-\delta^{K-i+1}}$ and using the equality ${j-1 \choose c}{j-c-1 \choose i-1}={j-1 \choose j-i}{j-i \choose c}$ and the binomial theorem $\sum_{k=0}^n {n \choose k} x^k y^{n-k}=(x+y)^n$, it readily follows that we have 
\begin{align}
U_j^i&=\frac{N_i}{1-\delta^{K-j+1}}{j-1 \choose j-i}, \;\;\; j\geq i.
\end{align}
By plugging the last expression into \eqref{eq:Rggl} using \eqref{eq:e65}, we have  
\begin{align}
\tilde{R}^{i}(K)&=\frac{{K \choose i}N_i}{\sum_{j=i}^{K}\frac{N_i}{1-\delta^{K-j+1}}{j-1 \choose j-i}}\\
&=\frac{{K \choose i}}{\sum_{k=1}^{K-i+1}\frac{{K-k \choose i-1}}{1-\delta^{k}}}
%&=\frac{{K \choose i}}{\sum_{k=1}^{K-i+1}\frac{{K-k \choose i-1}}{1-\delta^{k}}},
\end{align} 
which coincides the RHS of \eqref{eq:Rorder} for $i=1,\dots, K$. This establishes the achievability proof. 
\end{proof}

As a corollary of Theorem \ref{theorem:order}, we provide an alternative expression for the sum capacity.
\begin{cor}\label{cor:Sheng}
The sum capacity of the $K$-user symmetric broadcast erasure channel with state feedback can be expressed as a 
function of $R^{2}(K), \dots, R^K(K)$ by
\begin{align}\label{eq:Sheng}
R^1(K)  = \frac{KN_1 }{ \frac{K N_1}{1-\delta^{K}} + \sum_{i=2}^K \frac{{K\choose i} N_{1\rightarrow i}}{R^i(K)}},
\end{align}
where $\frac{K N_1}{1-\delta^{K}} $ is the duration of phase 1, ${K \choose j} N_{1\rightarrow j}$ corresponds to the total number of order-$j$ packets generated in phase 1.
\end{cor}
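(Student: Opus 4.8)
The plan is to re-examine the broadcasting algorithm of~\cite{wang2012capacity,gatzianas2013multiuser} run from phase~$1$ with $N_1$ fresh packets per singleton, split its transmission length into the phase-$1$ part and the multicasting part, and recognize the latter as the cost of delivering the higher-order packets manufactured in phase~$1$. From the achievability analysis in the proof of Theorem~\ref{theorem:order} (specialized to $i=1$) we have $R^1(K)=\frac{KN_1}{\sum_{j=1}^K\binom{K}{j}t^1_j}$, and by \eqref{eq:t^i_i} the phase-$1$ contribution is $\binom{K}{1}t^1_1=\frac{KN_1}{1-\delta^K}$. Hence the claim reduces to
\[
\sum_{j=2}^K\binom{K}{j}t^1_j=\sum_{i=2}^K\frac{\binom{K}{i}N_{1\rightarrow i}}{R^i(K)} .
\]

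First I would, for each seed order $i\in\{2,\dots,K\}$, consider the sub-algorithm launched at phase~$i$ with $N_{1\rightarrow i}$ order-$i$ packets per size-$i$ subset, writing $\tilde t^{(i)}_j$ for the length of a phase-$j$ sub-phase it induces ($j\ge i$); by \eqref{eq:t^i_j}--\eqref{eq:lengthN^i} the $\tilde t^{(i)}_j$ obey exactly the recursion defining the $t^i_j$, with $N_i$ replaced by $N_{1\rightarrow i}$. Then I would establish, by induction on $j$, the superposition identity
\[
t^1_j=\sum_{i=2}^{j}\tilde t^{(i)}_j ,\qquad j=2,\dots,K,
\]
the point being that the order-$j$ packets feeding phase~$j$ come, via \eqref{eq:lengthN^i}, from the phase-$l$ sub-phases with $l<j$ whose lengths $t^1_l$ decompose by the inductive hypothesis, that $N^1_{1\rightarrow j}=N_{1\rightarrow j}$ is precisely the seed of the order-$j$ sub-algorithm, and that the recursion \eqref{eq:t^i_j} is linear in these packet counts.

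Granting the superposition identity, I would exchange the order of summation,
\[
\sum_{j=2}^K\binom{K}{j}t^1_j=\sum_{i=2}^K\sum_{j=i}^K\binom{K}{j}\tilde t^{(i)}_j ,
\]
and evaluate each inner sum by replaying the computation in the proof of Theorem~\ref{theorem:order}, which established $\sum_{j=i}^K\binom{K}{j}t^i_j=\frac{\binom{K}{i}N_i}{R^i(K)}$ through the closed form $U^i_j=\frac{N_i}{1-\delta^{K-j+1}}\binom{j-1}{j-i}$; since $R^i(K)=\tilde R^i(K)$ is invariant under rescaling the seed count, replacing $N_i$ by $N_{1\rightarrow i}$ gives $\sum_{j=i}^K\binom{K}{j}\tilde t^{(i)}_j=\frac{\binom{K}{i}N_{1\rightarrow i}}{R^i(K)}$. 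Assembling the pieces yields $\sum_{j=1}^K\binom{K}{j}t^1_j=\frac{KN_1}{1-\delta^K}+\sum_{i=2}^K\frac{\binom{K}{i}N_{1\rightarrow i}}{R^i(K)}$, and substituting into $R^1(K)=KN_1\big/\sum_{j=1}^K\binom{K}{j}t^1_j$ proves \eqref{eq:Sheng}.

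I expect the superposition identity $t^1_j=\sum_{i=2}^j\tilde t^{(i)}_j$ to be the crux: one must argue carefully that the packets created in phase~$1$ and their entire downstream---the order-$i'$ packets ($i'>i$) subsequently spawned from them in phases $i,\dots,j-1$---are attributed to exactly one seed order, with no overlap and no omission, and that the per-subset symmetry survives so that a single length $\tilde t^{(i)}_j$ suffices per phase. Once this bookkeeping is in place the linearity of \eqref{eq:t^i_j}--\eqref{eq:lengthN^i} makes the induction mechanical; alternatively one can bypass it by substituting the closed forms $R^i(K)=\binom{K}{i}\big/\sum_{k=1}^{K-i+1}\binom{K-k}{i-1}/(1-\delta^k)$ and $N_{1\rightarrow i}=\frac{N_1}{1-\delta^K}\,\delta^{K-i+1}(1-\delta)^{i-1}$ directly into the claimed identity and verifying it as an identity in $\delta$, at the cost of a less transparent derivation.
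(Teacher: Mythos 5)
Your proposal is correct and follows essentially the same route as the paper: the paper likewise identifies the seed of the order-$i$ sub-algorithm with $N_{1\rightarrow i}$, uses $\tilde R^i(K)=\binom{K}{i}N_i\big/\sum_{j=i}^K\binom{K}{j}t^i_j$ to rewrite the denominator, exchanges the double sum, and reduces everything to the superposition identity $t^1_j=\sum_{i=2}^j t^i_j$, which it proves by the same induction on $j$ via the linear recursions \eqref{eq:t^i_j}--\eqref{eq:lengthN^i}. The bookkeeping you flag as the crux is exactly the content of the paper's equations \eqref{eq:e1}--\eqref{eq:e4}.
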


\begin{proof} 
By letting $f$ denote the RHS of \eqref{eq:Sheng}, we wish to prove the equality $f=
R^1(K)=\frac{K}{\sum_{k=1}^K \frac{1}{1-\delta^k}}$ by proving  $f=\tilde{R}^1(K)$. If it is
true, from the achievability proof of Theorem \ref{theorem:order} that proves $\tilde{R}^i=R^i$
for all $i$, the proof is complete.  
In the RHS of \eqref{eq:Sheng}, we replace $R^i$ by the expression $\tilde{R}^i$ in \eqref{eq:Rggl} by letting $N_{1\rightarrow i}=N_i$ for $i\geq 2$. Then, we have
\begin{align}
f & =\frac{KN_1}{\frac{KN_1}{1-\delta^{K}}+\sum_{i=2}^{K}\sum_{j=i}^{K}{K \choose j}t_j^{i}}\\
&=\frac{KN_1}{\frac{KN_1}{1-\delta^{K}}+\sum_{j=2}^{K}{K \choose j}\sum_{i=2}^{j}t_j^{i}}. 
\end{align}
Comparing the desired equality $f= \tilde{R}^1(K) = \frac{KN_{1}}{\sum_{j=1}^K {K \choose j} t_j^1}$ with the above expression and noticing that 
$\frac{KN_1}{1-\delta^{K}}= Kt_1^1$, we immediately see that it remains to prove the following equality.
\begin{align}\label{eq:t_j}
t_j^1 &=\sum_{i=2}^j t^i_j ~~~~~\forall j \geq 2.
\end{align}
We prove this relation recursively. For $j=2$, the above equality follows from \eqref{eq:t^i_j} and \eqref{eq:t^i_i}.
\begin{align}
t_2^1=\frac{N_{1\rightarrow 2}}{1-\delta^{K-1}}=t_2^{2}.
\end{align}
Now suppose that \eqref{eq:t_j} holds for $l=2,\dots,j-1$ and we prove it for $j$. From \eqref{eq:t^i_j} we have
\begin{align}
t_j^1 &=\frac{1}{1-\delta^{K-j+1}}\sum_{l=1}^{j-1}{j-1 \choose l-1}N^{1}_{l\rightarrow j} \\
&=\frac{1}{1-\delta^{K-j+1}}\left[N_{1\rightarrow j} + \sum_{l=2}^{j-1}{j-1 \choose l-1} t^{1}_l \delta^{K-j+1}(1-\delta)^{j-l} \right]\label{eq:e1}\\
&=\frac{1}{1-\delta^{K-j+1}}\left[N_{1\rightarrow j} + \sum_{l=2}^{j-1}{j-1 \choose l-1} \sum_{i=2}^{l}t^{i}_{l} \delta^{K-j+1}(1-\delta)^{j-l} \right]\label{eq:e2}\\
&=\frac{1}{1-\delta^{K-j+1}}\left[N_{1\rightarrow j} + \sum_{l=2}^{j-1}{j-1 \choose l-1} \sum_{i=2}^{l} N^{i}_{l\rightarrow j}\right]\label{eq:e3}\\
&=\frac{1}{1-\delta^{K-j+1}}\left[N_{1\rightarrow j} + \sum_{i=2}^{j-1} \sum_{l=i}^{j-1} {j-1 \choose l-1}N^{i}_{l\rightarrow j}\right]\label{eq:e4}\\
&=t_j^j + \sum_{i=2}^{j-1}t^{i}_j,  %\frac{t^{1}_1 \alpha_{1\rightarrow i}}{\beta_i}\\
\end{align}
where 
\eqref{eq:e1} follows from \eqref{eq:lengthN^i}; \eqref{eq:e2}
follows from our hypothesis \eqref{eq:t_j}; \eqref{eq:e3} follows from \eqref{eq:lengthN^i}; \eqref{eq:e4} is due to the equality $\sum_{l=2}^{j-1}\sum_{i=2}^{l}=\sum_{i=2}^{j-1} \sum_{l=i}^{j-1}$; the last equality is due to \eqref{eq:t^i_j}.
Therefore, the desired equality holds also for $j$. This completes the proof of Corollary \ref{cor:Sheng}.
\end{proof}

%%%%%%%%%
\section{Achievability} \label{section:achievability2}
We provide the achievability proof of Theorem \ref{theorem:region} for the case of one-sided fair rate vector as well as the symmetric network.
The proof for the case of $K=3$ is omitted, since it is a straightforward extension of \cite[Section V]{wang2012capacity}. 

\subsection{Proposed delivery scheme for $K>3$ }\label{subsection:delivery}
We describe the proposed delivery scheme for the case of $K>3$ assuming that user $k$ requests file $W_k$ of size $F_k$ packets for $k=1,\dots, K$ without loss of generality.  
Compared to the algorithm \cite{wang2012capacity,gatzianas2013multiuser} revisited previously, our scheme must convey packets created during the  placement phase as well as all previous phases in each phase. Here is a high-level description of our proposed delivery scheme.
\begin{enumerate}
\item Placement phase (phase 0): fill the caches $Z_1,\dots, Z_K$ according to the decentralized content placement (see subsection  \ref{subsection:Decentralized}).  
This phase creates ``overheard'' packets $\{\Lc_{\Jc\setminus k}(W_k)\}$ for $\Jc\subset [K]$ and all $k$ to be delivered during phases 1 to $K$. 
\item Broadcasting phase (phase 1): the transmitter sends $V_1,\ldots,V_K$ sequentially until at least one user receives it, where $V_k=\Lc_{\emptyset}(W_k)$ 
corresponds to the order-$1$ packets.
\item Multicasting phase (phases 2-$K$): for a subset $\Jc$ of users, generate $V_{\Jc}$ as a linear combination of overheard packets during the placement phase as well as during phases $1$ to $j-1$. Send $V_{\Jc}$ sequentially for $\Jc\subseteq[K]$, 
\begin{align}%\label{eq:V_J}
V_{\Jc} = \Fc_{\Jc}\left(\{\Lc_{\Jc \setminus \Ic \cup \Ic'} (V_{\Ic})\} _{ \Ic'\Ic:\Ic' \subset \Ic \subset \Jc} , \Lc_{\Jc\setminus\{k\}}(W_{k})\right).
\end{align}
\end{enumerate}
The proposed delivery scheme achieves the optimal rate region only in two special cases. We provide the proof separately in upcoming subsections. 
%%%%%%%%%%%%%
\subsection{Proof of Theorem \ref{theorem:region} for the case of one-sided fair rate vector }\label{subsection:proof1} 
We assume without loss of generality $\delta_1\geq  \dots \geq  \delta_K$, $\delta_1R_1\geq  \dots \geq  \delta_KR_K$, and $\frac{1-p_1}{p_1} R_1\geq \dots \geq \frac{1-p_2}{p_2}R_K$. Under this setting, we wish to prove the achievability of the following equality.
\begin{align}\label{eq:desired-onesided}
\sum_{k=1}^K \frac{\prod_{j=1}^k (1-p_j)}{1-\prod_{j=1}^k \delta_j} R_k = 1. 
\end{align}
By replacing $R_k= \frac{F_{d_k}}{T_{\rm tot}}$ and further assuming $d_k =k$ for all $k$ without loss of generality, the above equality is equivalent to
\begin{align} \label{eq:desired-Ttot}
T_{\rm tot} = \sum_{k=1}^K \frac{\prod_{j=1}^k (1-p_j)}{1-\prod_{j=1}^k \delta_j} F_k.
\end{align}
The rest of the subsection is dedicated to the proof of the total transmission length
\eqref{eq:desired-Ttot}. We start by rewriting $t_{\Jc}^{\{k\}}$ in \eqref{eq:t^k_J} by
incorporating the packets generated during the placement phase. Namely we have for $k\in \Jc \subseteq [K]$  
\begin{align}\label{eq:t^k_J2}
t_{\Jc}^{\{k\}}&=\frac{\sum_{\Ic:k\in\Ic\subset \Jc}  N^{\{k\}}_{\Ic\rightarrow
\Jc}+|\Lc_{\Jc\setminus\{k\}}(W_{k})|}{1- \prod_{j\in [K]\setminus\Jc\cup\{k\} } \delta_j}. 
\end{align}
We recall that the length of sub-phase $\Jc$ is given by $t_{\Jc}=\max_{k\in\Jc}t_{\Jc}^{\{k\}}$. Our proof consists of four steps.
\paragraph{Step 1~}
We express $t_{\Jc}^{\{k\}}$ as a function of key parameters $\{\delta_k\}, \{p_k\}, \{F_k\}$ in two different ways. By following similar steps as in \cite[Appendix C]{gatzianas2013multiuser}, the aggregate length of sub-phases $\Ic \subseteq \Jc$ required by user $k$ for a fixed $\Jc\subseteq[K]$ is given by 
\begin{align}\label{eq:SumSubphase}
\sum_{\Ic:k\in\Ic\subseteq\Jc}t_{\Ic}^{\{k\}}=\frac{\prod_{j\in[K]\setminus\Jc\cup\{k\}}(1-p_j)}{1-\prod_{j\in[K]\setminus\Jc\cup\{k\}}\delta_j}F_k.
\end{align}
We have an alternative expression for $t_{\Jc}^{\{k\}}$ which is useful as will be seen shortly. 
The length of sub-phase $\Jc$ needed by user $k$ such that $k\in\Jc\subseteq[K]$ is equal to
\begin{align}\label{eq:SubphaseLength}
t_{\Jc}^{\{k\}}=\sum_{\Hc:\Hc\subseteq\Jc\setminus\{k\}}(-1)^{|\Hc|}\frac{\prod_{j\in[K]\setminus\Jc\cup\{k\}\cup\Hc}(1-p_j)}{1-\prod_{j\in[K]\setminus\Jc\cup\{k\}\cup\Hc}\delta_j}F_k. 
\end{align}
The proof is provided in Appendix \ref{appendix:subphases}.
\paragraph{Step 2~}
The length of sub-phase $\Jc$ is determined by the worst user which requires the maximum length, i.e.
$\arg\max_{k\in \Jc} t^{\{k\}}_{\Jc}$. For the special case of one-sided fair rate vector, by
using \eqref{eq:SubphaseLength} it is possible to prove that the worst user is given by 
\begin{align}\label{eq:PiOrder}
\arg\max_{k\in\Jc}{t^{\{k\}}_{\Jc}}=\min\{\Jc\}\quad,\forall\,\Jc\subseteq[K],
\end{align}
where $\min\{\Jc\}$ is the smallest index in the set of users $\Jc$ that corresponds to the user with the largest erasure probability.
The proof is provided in Appendix \ref{appendix:Ordering}.
This means that the user permutation (which determines the sub-phase length) is preserved in all sub-phases for the one-sided fair rate vector.  
 \paragraph{Step 3~} By combining the two previous steps, the total transmission length can be derived as follows. 
\begin{align}
T_{\rm tot}&= \sum_{\Jc:\Jc\subseteq[K]}\max_{k\in \Jc}t^{\{k\}}_{\Jc}\\
&= \sum_{\Jc:\Jc\subseteq[K]}t^{\{ \min\Jc\}}_{\Jc} \label{eq:st2}\\
&=\sum_{k=1}^{K}\sum_{\Jc:k\in\Jc\subseteq\{k,\dots,K\}}t^{\{ k\}}_{\Jc}\\
&=\sum_{k=1}^{K}F_k\frac{\prod_{j=1}^{k}(1-p_j)}{1-\prod_{j=1}^{k}\delta_j}, 
\end{align}
where \eqref{eq:st2} is obtained from \eqref{eq:PiOrder}; the last equality follows from \eqref{eq:SumSubphase}. Then, we obtain
the desired equality \eqref{eq:desired-Ttot}. 
%Dividing both sides by $T_{\rm tot}$ and letting $R_{_k} = \frac{F_{k}}{T_{\rm tot}}$, we readily obtain the RHS of \eqref{eq:wsr} for the identity permutation. 
\paragraph{Step 4~} The final step is to prove that under the one-sided fair rate vector \eqref{eq:desired-onesided} implies all the other $K!-1$ inequalities of the rate region \eqref{eq:wsr}. This is proved in Appendix \ref{appendix:Implication}. Hence, the achievability proof for the one-sided rate vector is completed. 

\subsection{Proof of Theorem \ref{theorem:region} for the symmetric network }\label{subsection:proof2}
First we recall the rate region of the symmetric network with uniform channel statistics and
memory sizes given in \eqref{eq:symmetric},
\begin{align} %\label
\sum_{k=1}^{K}\frac{(1-p)^k}{1-\delta^k}R_{\pi_k}\leq 1,\quad\forall \pi.
\end{align}
Exploiting the polyhedron structure and following the same footsteps as  \cite[Section V]{maddah2010degrees}, we can prove that the vertices of the above rate region are characterized as: 
\begin{align}
R_k = \begin{cases}
R_{\rm sym}( |\Kc|), k \in \Kc \\
0, k\notin \Kc 
\end{cases}
\end{align}
for $\Kc \subseteq [K]$, where the symmetric rate $R_{\rm sym}(K)$ is given by 
\begin{align}\label{eq:symmetricR}
 R_{\rm sym} (K)= \frac{1}{\sum_{k=1}^K
 \frac{\left(1-p\right)^k}{1-\delta^k}}.
\end{align}
This means that when only $|\Kc|$ users are active in the system, each of these users achieves the same symmetric rate as the reduced system of dimension $|\Kc|$. Then, it suffices to prove the achievability of the symmetric rate for a given dimension $K$. As explained in subsection \ref{subsection:delivery}, the placement phase generates ``overheard packets'' $\{\Lc_{\Jc\setminus k }(W_k)\}$ for $\Jc \subseteq [K]$ and all $k$. 
We let $N_{0\rightarrow j}=|\Lc_{\Jc\setminus k}(W_k)|$ denote the number of order-$j$ packets created during the placement phase. Then,  we can express 
the sum rate of the cached-enabled EBC by incorporating the packets generated from the placement
phase into \eqref{eq:Sheng} as follows,
\begin{align}
KR_{\rm sym}(K) = \frac{K F}{\frac{KN_{0\rightarrow 1}}{\beta_1} + \sum_{j=2}^K
\frac{{K\choose j} (N_{0\rightarrow j} + N_{1\rightarrow j})}{R^j(K)} }.
\end{align} 
By repeating the same steps as the proof of Corollary \ref{cor:Sheng}, it readily follows that the above 
expression boils down to $\frac{K}{\sum_{k=1}^K
 \frac{\left(1-p\right)^k}{1-\delta^k}}$. This establishes the achievability proof for the symmetric network. 

%%%%%%%%%
\section{Extensions} \label{section:Extensions}
In this section, we provide rather straightforward extensions of our previous results to other scenarios such as the centralized content placement and the multi-antenna broadcast channel with the state feedback. 

\subsection{Centralized content placement}
So far, we have focused on the decentralized content placement. We shall show in this subsection
that the rate region under the decentralized content placement can be easily modified to the case of the
centralized content placement proposed in \cite{maddah2013fundamental}. We restrict ourselves to
the symmetric memory size $M_k=M$ such that $M\in \{0,N/K, 2N/K, \dots, N\}$ so that the
parameter $b=\frac{MK}{N}$ is an integer. Each file is split into ${K \choose b}$ disjoint equal
size sub-files. Each sub-file is cached at a subset of users $\Jc$, $\forall\,\Jc\subseteq[K]$ with cardinality $|\Jc|=b$. Namely, the size of any sub-file of file $i$ is given by 
\begin{align}\label{eq:cent}
 |\Lc_{\Jc}(W_i)|=\frac{1}{{K \choose b}}F_i,
 \end{align}  
which satisfies the memory constraint for user $k$ 
\begin{align}
  |Z_{k}|=\sum_{i=1}^{N}\sum_{\Jc:k\in\Jc;|\Jc|=b}|\Lc_{\Jc}(W_i)|=\sum_{i=1}^{N}{K-1 \choose
  b-1}\frac{F_i}{{K \choose b}}=\sum_{i=1}^{N}\frac{b}{K}F_i=MF. \label{eq:LLN2}
\end{align}%
In analogy to Lemma \ref{lemma:decentralized} for the decentralized content placement, we can characterize the message entropy given the receiver side information. 
\begin{lemma} \label{lemma:centralized} 
For the centralized content placement \cite{maddah2013fundamental}, the following equalities hold for any $i$ and $\Kc\subseteq[K]$
\[H(W_i \cond \{Z_{k}\}_{k\in \Kc})=\frac{{K-|\Kc| \choose b}}{{K \choose b}}H(W_i).\]
\end{lemma}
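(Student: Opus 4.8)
## Proof proposal for Lemma \ref{lemma:centralized}

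The plan is to mirror the proof of Lemma \ref{lemma:decentralized}, but the situation is in fact simpler because the centralized placement is deterministic: there is no random subset $\Lc_\Jc$ to average over, so instead of an inequality we obtain an exact equality. First I would expand the cache contents: under the centralized scheme each $Z_k$ consists of all sub-files $\Lc_\Jc(W_l)$ with $k\in\Jc$, $|\Jc|=b$, over all files $l=1,\dots,N$. Conditioning $W_i$ on $\{Z_k\}_{k\in\Kc}$ and using the independence across files $W_1,\dots,W_N$, only the sub-files of $W_i$ itself matter, so
\begin{align}
H(W_i \cond \{Z_{k}\}_{k\in \Kc}) = H\bigl( W_i \cond \{\Lc_\Jc(W_i)\}_{\Jc\cap\Kc\ne\emptyset,\,|\Jc|=b} \bigr).
\end{align}

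Next I would identify exactly which sub-files of $W_i$ remain unknown. A sub-file $\Lc_\Jc(W_i)$ with $|\Jc|=b$ is known to the enhanced receiver iff $\Jc$ intersects $\Kc$, i.e. iff $\Jc$ contains at least one user of $\Kc$; hence the unknown sub-files are precisely those indexed by $\Jc\subseteq[K]\setminus\Kc$ with $|\Jc|=b$. Since $W_i$ is the disjoint concatenation of all its ${K\choose b}$ sub-files and these are mutually independent and uniform, the conditional entropy equals the total size of the unknown sub-files:
\begin{align}
H(W_i \cond \{Z_{k}\}_{k\in \Kc}) = \sum_{\substack{\Jc\subseteq[K]\setminus\Kc \\ |\Jc|=b}} |\Lc_\Jc(W_i)| = \#\{\Jc\subseteq[K]\setminus\Kc: |\Jc|=b\}\cdot \frac{1}{{K\choose b}}H(W_i).
\end{align}

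Finally I would just count: the number of $b$-subsets of the $(K-|\Kc|)$-element set $[K]\setminus\Kc$ is ${K-|\Kc|\choose b}$, which gives the claimed factor ${K-|\Kc|\choose b}/{K\choose b}$ (interpreted as $0$ when $|\Kc|>K-b$, consistent with $H(W_i\cond\cdot)=0$ once enough caches are revealed). There is no real obstacle here; the only point requiring a line of care is justifying that conditioning on $\{Z_k\}_{k\in\Kc}$ reveals exactly the sub-files $\Lc_\Jc(W_i)$ with $\Jc\cap\Kc\ne\emptyset$ and nothing more about the remaining sub-files — this follows from the independence and uniformity of the sub-file decomposition together with the independence across the $N$ files, exactly as in Lemma \ref{lemma:decentralized}. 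The contrast with the decentralized case is worth a remark: there the membership of each packet in $\Lc_\Jc$ is itself random, forcing an expectation over a binomial and yielding only a lower bound, whereas here the partition is fixed so the bound is tight.
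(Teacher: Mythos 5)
Your proof is correct and follows essentially the same route as the paper's: identify the unknown sub-files as those $\Lc_\Jc(W_i)$ with $\Jc\subseteq[K]\setminus\Kc$ and $|\Jc|=b$, use independence and the fixed sub-file size $F_i/{K\choose b}$, and count the ${K-|\Kc|\choose b}$ such subsets. Your added remarks on the edge case $|\Kc|>K-b$ and on why the centralized (deterministic) partition yields equality where the decentralized one only gives a bound are accurate but not needed.
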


\begin{proof}
Under the centralized content placement
\begin{align}
 H(W_i \cond \{Z_{k}\}_{k\in \Kc} )& = \sum_{  \Jc\subseteq [K]\setminus\Kc} H(  \Lc_{\Jc}(W_i))\\
  & = \sum_{ \Jc\subseteq [K]\setminus\Kc; |\Jc|=b} H(  \Lc_{\Jc}(W_i) ) \label{eq:tmp0893} \\
  & = \sum_{ \Jc\subseteq [K]\setminus\Kc; |\Jc|=b} \frac{1}{{K \choose b}}H(W_{i}) \label{eq:tmp0894}\\
  & =  \frac{{K-|\Kc| \choose b}}{{K \choose b}}H(W_{i}), 
\end{align}
where the first equality follows by repeating the same steps from \eqref{eq:tmp8921} to \eqref{eq:tmp892};  \eqref{eq:tmp0893} and \eqref{eq:tmp0894} follows from the definition of the centralized content placement \eqref{eq:cent}.
\end{proof}
Then, we present the rate region of the cache-enabled EBC under the centralized content placement. 
\begin{theorem} \label{theorem:region3}
For the symmetric network, the rate region of the cached-enabled EBC with the state feedback under the centralized content placement is given by
\begin{align}\label{eq:wsr3}
\sum_{k=1}^{K-b} \frac{{K-k \choose b}/{K \choose b}}{1-\delta^k} R_{\pi_k} \leq 1
\end{align}
for any permutation $\pi$ of $\{1,\dots, K\}$. 
\end{theorem}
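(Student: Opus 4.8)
The plan is to follow the same two-pronged template used for Theorem~\ref{theorem:region}: establish the converse via genie-aided bounds using Lemmas~\ref{lemma:erasure-ineq} and~\ref{lemma:centralized}, and establish achievability by feeding the placement-phase packets into the broadcasting algorithm of Section~\ref{section:Revisiting}. For the converse, I would mimic verbatim the chain of inequalities in Section~\ref{section:UpperBound}, replacing the decentralized bound $H(W_i\cond\{Z_k\}_{k\in\Kc})\ge\prod_{k\in\Kc}(1-p_k)H(W_i)$ with the centralized identity $H(W_i\cond\{Z_k\}_{k\in\Kc})=\tfrac{\binom{K-|\Kc|}{b}}{\binom{K}{b}}H(W_i)$ from Lemma~\ref{lemma:centralized}. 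Concretely, for the demand $(d_1,\dots,d_K)=(1,\dots,K)$ and no permutation, one gets $n\tfrac{\binom{K-k}{b}}{\binom{K}{b}}R_k\le H(W_k\cond Z^kS^n)\le I(W_k;Y^n_{[k]}\cond W^{k-1}Z^kS^n)+n\epsilon'_{n,k}$, where the step $I(W_k;W^{k-1}\cond Z^kS^n)=0$ still holds because the centralized caches also store disjoint pieces of each file. Summing with the weights $\tfrac{\binom{K-k}{b}/\binom{K}{b}}{1-\delta^k}$ and applying Lemma~\ref{lemma:erasure-ineq} exactly $K-1$ times telescopes everything down to $H(Y_1^n)/(n(1-\delta))\le 1$, giving~\eqref{eq:wsr3}. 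Note that for $k>K-b$ the coefficient $\binom{K-k}{b}$ vanishes, which is why the sum in~\eqref{eq:wsr3} truncates at $K-b$: those users have cached enough that, conditioned on all other messages, their file is already fully known.

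For achievability I would invoke the symmetric-network machinery of Section~\ref{subsection:proof2}. The key point is that centralized placement, like decentralized placement, generates ``overheard'' sub-files $\Lc_{\Jc}(W_k)$ distributed across subsets $\Jc$; here every such sub-file has cardinality exactly $b$, so placement creates order-$j$ packets \emph{only} for $j=b+1$ (a sub-file known to $b$ users, useful to one more), whereas decentralized placement spreads packets over all orders. I would set $N_{0\to j}$, the number of order-$j$ packets created in the placement phase, to be nonzero only for the appropriate order, then run the Wang--Gatzianas algorithm from phase $1$ onward, feeding these side-information packets in at the right phase exactly as in the decentralized case. Plugging the resulting $N_{0\to j}$ into the recursion from Corollary~\ref{cor:Sheng} (the identity $t_j^1=\sum_{i}t_j^i$ and $U_j^i=\tfrac{N_i}{1-\delta^{K-j+1}}\binom{j-1}{j-i}$) and simplifying yields the symmetric rate $R_{\rm sym}(K)=1/\sum_{k=1}^{K-b}\tfrac{\binom{K-k}{b}/\binom{K}{b}}{1-\delta^k}$. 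Finally, the polyhedron/vertex argument of Section~\ref{subsection:proof2} — that activating only the users in $\Kc$ reduces the system to an $|\Kc|$-user symmetric one — extends the vertex achievability to the whole region~\eqref{eq:wsr3}.

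The main obstacle is the bookkeeping in achievability: verifying that the centralized side-information packets slot cleanly into the multi-phase algorithm and that the counting of order-$j$ packets (both those born in the placement phase and those born in phases $1,\dots,j-1$) produces precisely the coefficients $\binom{K-k}{b}/\binom{K}{b}$ rather than the decentralized weights $\prod_{j\le k}(1-p_j)$. I expect this to reduce, after the substitution $|\Lc_\Jc(W_i)|=F_i/\binom{K}{b}$, to a clean combinatorial identity involving $\sum_{\Jc}$ over subsets of fixed size $b$, analogous to but simpler than the $\sum_{\Jc\subseteq\Mc}\prod_{j\in\Jc}p_j\prod_{k\in\Mc\setminus\Jc}(1-p_k)=1$ identity used in Lemma~\ref{lemma:decentralized}; the truncation at $K-b$ should fall out automatically since there are no sub-files $\Lc_\Jc(W_i)$ with $|\Jc|>b$. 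The converse, by contrast, should be essentially a transcription of Section~\ref{section:UpperBound} with one lemma swapped, so I would present it first and in full, then treat achievability as a modification of Section~\ref{subsection:proof2}.
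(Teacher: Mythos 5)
Your proposal is correct and follows essentially the same route as the paper: the converse is the Section~\ref{section:UpperBound} argument with Lemma~\ref{lemma:decentralized} replaced by Lemma~\ref{lemma:centralized}, and the achievability rests on the observation that centralized placement produces only order-$(b+1)$ side-information packets, so the delivery reduces to multicasting order-$(b+1)$ messages. The only cosmetic difference is that the paper, rather than re-running the Corollary~\ref{cor:Sheng} recursion with a single nonzero $N_{0\to j}$, starts the algorithm directly at phase $b+1$ and divides the ${K\choose b+1}F/{K\choose b}$ placement-generated packets by the order-$(b+1)$ sum capacity $R^{b+1}(K)$ from Theorem~\ref{theorem:order} — the same computation your bookkeeping would collapse to.
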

\begin{proof}
Following the same steps as in section \ref{section:UpperBound} and replacing Lemma \ref{lemma:decentralized} with Lemma \ref{lemma:centralized}, the converse proof follows immediately. 

For achievability, as explained in subsection \ref{subsection:proof2}, it is sufficient to
consider the case of symmetric rate for a given dimension. By focusing without loss of
generality on the dimension $K$, we fix the number of packets per user to be $F$ and prove that our proposed scheme can deliver requested files to users within the total transmission length given by
\begin{align}\label{eq:LengthCentralized}
T_{\rm tot} = F\sum_{k=1}^{K-b} \frac{{K-k \choose b}/{K \choose b}}{1-\delta^k} + \epsilon_F,
\end{align} 
as $F\rightarrow \infty$. 
We proceed our proposed delivery scheme from phase $b+1$ by sending packets of order $b+1$. More precisely, in phase $b+1$ we generate and send the packets intended to
$\Jc$ by the following linear combination 
\begin{align}\label{eq:Initialization}
V_{\Jc} = \Fc_{\Jc}\left(\Lc_{\Jc\setminus k }(W_{k})\right),
\end{align}
for $\Jc \subseteq [K]$ with $|\Jc|=b+1$. In subsequent phases $b+2$ to $K$, we repeat 
\begin{align}\label{eq:Vcentralized}
V_{\Jc} = \Fc_{\Jc}\left(\{\Lc_{\Jc \setminus \Ic \cup \Ic'} (V_{\Ic})\} _{ \Ic'\Ic:\Ic' \subset \Ic \subset \Jc} \right)
\end{align}
for $\Jc \subseteq [K]$ with $|\Jc|=b+2, \dots, K$. 
%where the number of packets at each sub-phase of order $b+1$ is equal to $\frac{1}{{K \choose b}}F$.
In order to calculate the total transmission length required by our delivery algorithm, we
follow the same footsteps as in subsection \ref{subsection:proof2} and 
exploit Theorem \ref{theorem:order} on the sum capacity of order-$i$ messages that we recall here for the sake of clarity. 
\begin{align}
R^i (K) =  \frac{{K \choose i}}{\sum_{k=1}^{K-i+1} \frac{ {K-k \choose i-1}}{1-\delta^k}}.
\end{align}
Noticing that there are ${K \choose b+1}$ sub-phases in phase $b+1$ and in each sub-phase we
send a linear combination whose size is $\frac{F}{{K \choose b}}$, the total transmission length is given by 
\begin{align}
 T_{\rm tot}&=\frac{{K \choose b+1}/ {K \choose b}}{ R^{b+1}}F\\
% &=\frac{{K \choose b+1}/{K \choose b} \sum_{k=1}^{K-b} \frac{ {K-k \choose b}}{1-\delta^k}}{ {K \choose b+1}}F\\
 &=\sum_{k=1}^{K-b} \frac{ {K-k \choose b}/{K \choose b}}{1-\delta^k}F,
 \end{align}
 where the last equality follows by plugging the expression $R^{b+1}$.
% given the fact that $R=\frac{F}{T_{\rm tot}}$ we obtain
%\begin{align}
% \sum_{k=1}^{K-b} \frac{ {K-k \choose b}/{K \choose b}}{1-\delta^k}R=1
%\end{align}
%It suffices to prove the achievability for symmetrical rates since we exploit the polyhedron structure of the rate region as we did in case of decentralized cache placement. 
\end{proof}
%The following corollary holds. 
%\begin{cor}\label{cor:rate3}
%For the symmetric network($p_k=p, \delta_k=\delta~~\forall k$) and equal file size, the minimum transmission length to deliver a distinct requested file to each user in the cached-enabled EBC under centralized cache placement is given by
%\begin{align}\label{eq:rate3}
%T_{\rm tot}=\sum_{k=1}^{K}\frac{{K-k \choose b}/{K \choose b}}{1-\delta^{k}}F+\epsilon_{F}
%\end{align}
%as $F\rightarrow \infty$.
%\end{cor}
%
For the case without erasure, Theorem \ref{theorem:region3}, in particular, the expression of
the transmission length in \eqref{eq:LengthCentralized}, becomes the rate-memory tradeoff under the centralized content placement \cite{maddah2013fundamental} given by
\begin{align}
\frac{T_{\rm tot}}{F}=K\left( 1-M/N\right) \frac{1}{1+KM/N}.
\end{align}

\subsection{MISO-BC}
We consider the multi-input single-output broadcast channel (MISO-BC) between a $N_t$-antennas
transmitter and $K$ single-antenna receivers.  The channel state $S_l$ in slot $l$ is given by the $N_t\times K$ matrix and we restrict ourselves to the i.i.d. channels across time and users.  
Here, we are interested in the capacity scaling in the high signal-to-noise ratio (SNR) regime and define the degree of freedom (DoF) of user $k$ as
\[
\DoF_k=\lim_{\SNR\rightarrow\infty}\frac{R_{k}}{\log_{2} \SNR}.
\]
 We define the sum DoF of order-$j$ messages given by 
\begin{align}
\DoF^j=\lim_{\SNR\rightarrow\infty}\sum_{\Jc:|\Jc|=j}\frac{R_{\Jc}}{\log_{2} \SNR}.
\end{align}
First we recall the mains results on the MISO-BC with state feedback by Maddah-Ali and Tse
\cite{maddah2010degrees}. In \cite[Theorem 3]{maddah2010degrees},  the DoF region of the MISO-BC
with state feedback has been characterized as
\begin{align}\label{eq:DoFregion}
\sum_{k=1}^K \frac{\DoF_{\pi_k}}{k} \leq 1, ~~~ \forall \pi.
\end{align}
The sum DoF of order-$j$ messages has been characterized in \cite[Theorem 2]{maddah2010degrees} and is given by
\begin{align}\label{eq:sumDoF}
\DoF^j =\frac{{K \choose j}}{\sum_{k=1}^{K-j+1} \frac{ {K-k \choose j-1}}{k}}.
\end{align}
It is worth comparing the DoF region of the MISO-BC in \eqref{eq:DoFregion} and the capacity region of the EBC in \eqref{eq:regionEBC}. In fact, as remarked in 
 \cite{ShengISIT2015}, both regions have exactly the same structure and can be unified through a parameter $\alpha_k=k$ for the MISO-BC and $\alpha_k=1-\delta^k$ for the EBC. The same holds for the sum DoF of order-$j$ messages in the MISO-BC in \eqref{eq:sumDoF} and the sum capacity of order-$j$ packets in the EBC  
characterized in Theorem \ref{theorem:order}. 
By exploiting this duality and replacing $1-\delta^k$ with $k$ in the rate region of the symmetric EBC \eqref{eq:symmetric}, we can easily characterize
the DoF region of the cache-enabled MISO-BC with state feedback. Namely, under the decentralized content placement, the DoF region is given by 
\begin{align}
\sum_{k=1}^K \frac{(1-p)^k}{k} \DoF_{\pi_k} \leq 1, \;\; \forall \pi
\end{align}
for $N_t\geq K$,  
while under the centralized content placement, the DoF region is given by 
\begin{align}\label{eq:wsr2}
\sum_{k=1}^{K-b} \frac{{K-k \choose b}/{K \choose b}}{k} \DoF_{\pi_k} \leq 1, \;\; \forall \pi
\end{align}
for $N_t\geq K-b$.
The converse follows exactly in the same manner except that we use the entropy inequality for the MISO-BC given in \cite[Lemma 4]{ShengISIT2015} by replacing the entropy by the differential entropy and again $1-\delta^k$ by $k$. The achievability can be proved by modifying the scheme in \cite{maddah2010degrees} to the case of receiver side information along the line of \cite{piantanida2013analog}. 

As a final remark, for the case of the centralized content placement, our DoF region in \eqref{eq:wsr2} yields the following transmission length
\begin{align}
 T_{\rm tot}=\sum_{k=1}^{K-b} \frac{ {K-k \choose b}/{K \choose b}}{k}F,
 \end{align}
which coincides with \cite[Corollary 2b]{Elia2015}.
%%%%%%%%%
\section{Numerical Examples}\label{section:Examples}
\begin{figure*}
  \begin{minipage}{\columnwidth}
%\vspace{-10pt}
\begin{center}
\includegraphics[width=0.45\textwidth,clip=]{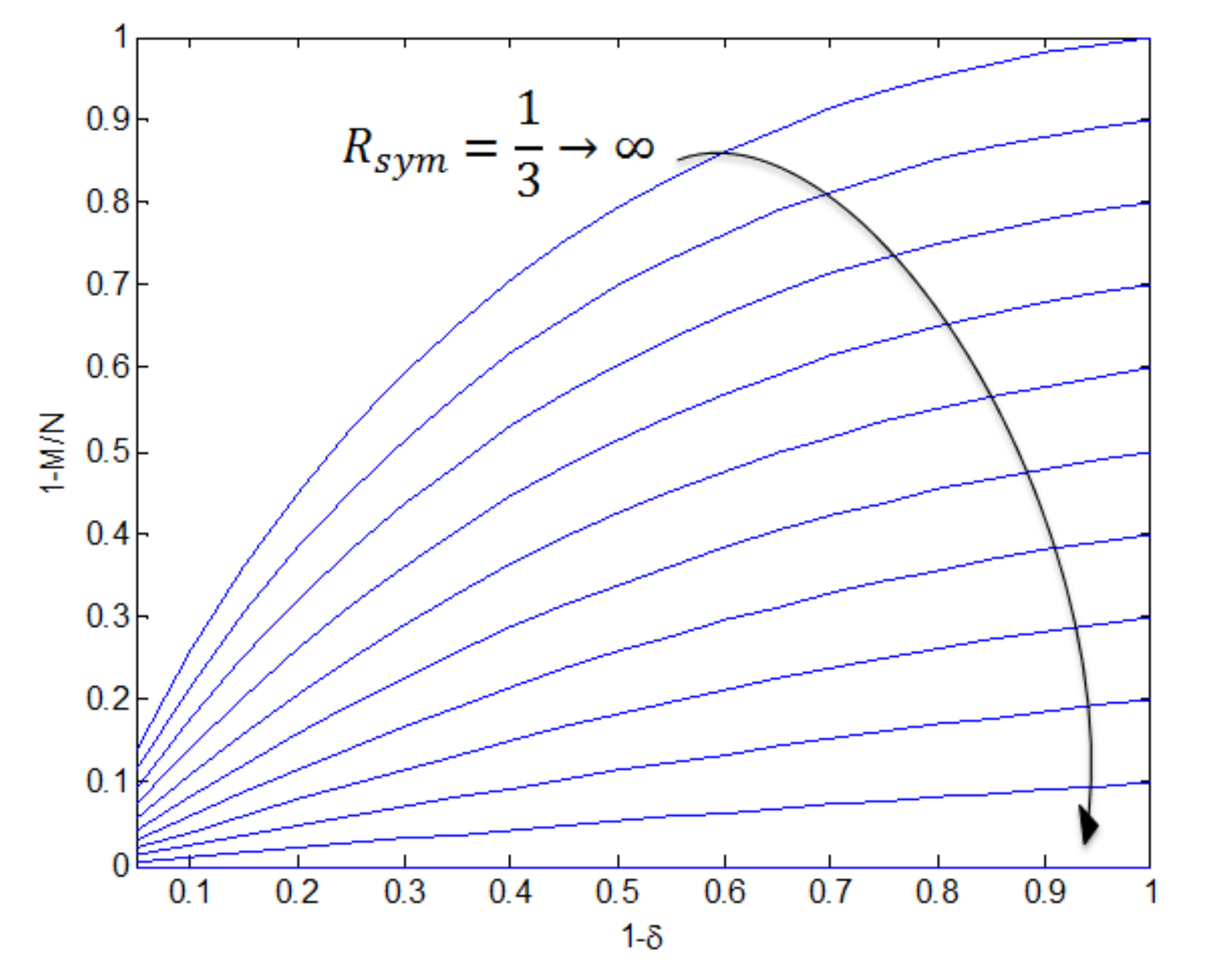}
%\vspace{-2pt}
\caption{The tradeoff between the memory and the erasure for $K=3$.}
\label{fig:memory-erasure}
\end{center}
%\vspace{-20pt}
\end{minipage}
%\begin{figure}
%\vspace{-10pt}
  \begin{minipage}{\columnwidth}
\begin{center}
\includegraphics[width=0.45\textwidth,clip=]{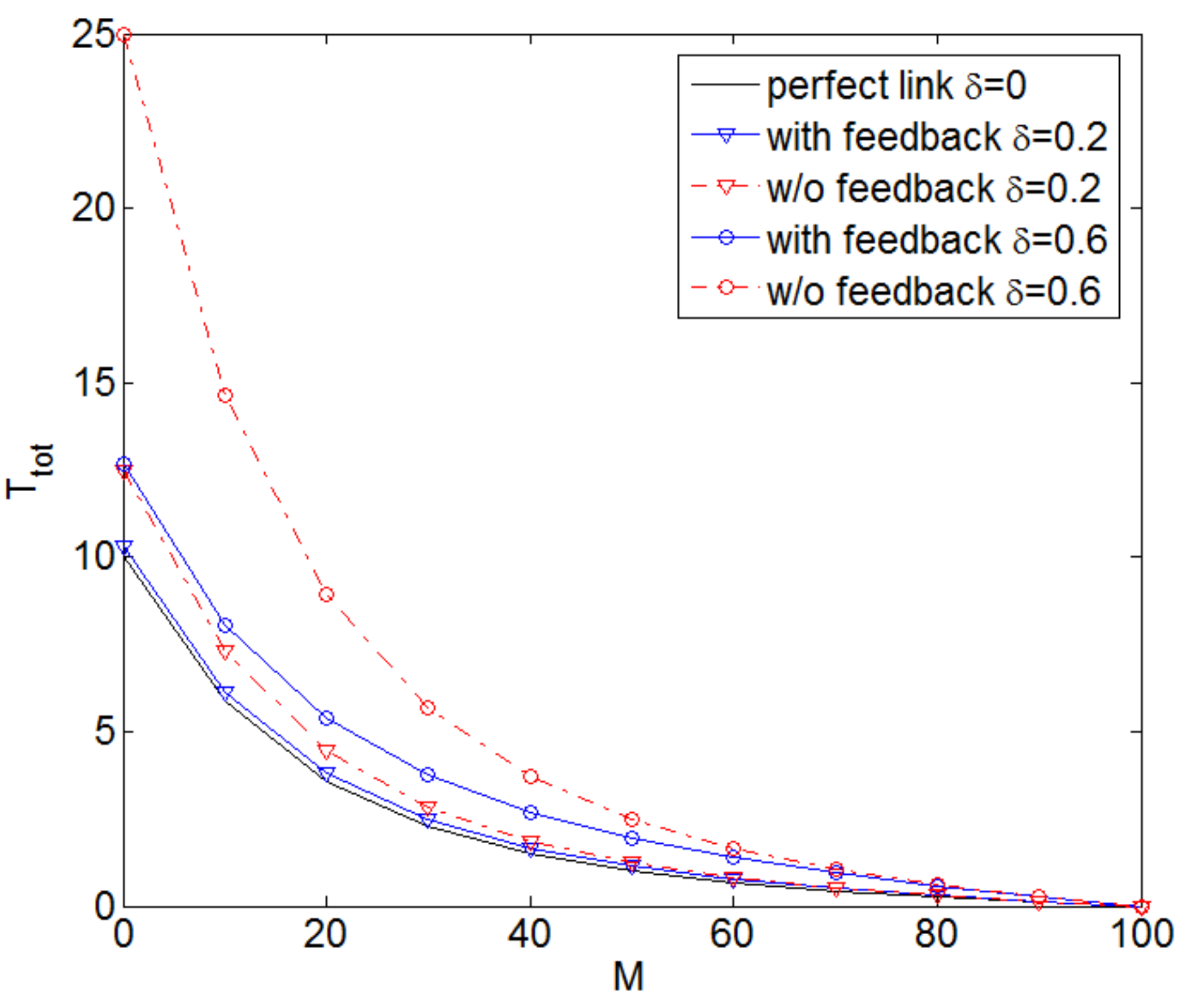}
%\vspace{-2pt}
\caption{The transmission length $T_{\rm tot}$ as a function of memory size $M$ for $N=100, K=10$. }
\label{fig:TtotvsM}
\end{center}
%\end{figure}
\end{minipage}
\end{figure*}

In this section, we provide some numerical examples to show the performance of our proposed delivery scheme. 
Fig.~\ref{fig:memory-erasure} illustrates the tradeoff between the erasure probability and the memory size for the symmetric network with $K=3$ for the case of the decentralized content placement
Each curve corresponds to a different symmetric rate $R_{\rm sym}(3)=\frac{1}{\sum_{k=1}^3\frac{(1-p)^k}{1-\delta^k}}$. The arrow shows the increasing symmetric rate from $1/3$, corresponding to case with no memory and no erasure, to infinity. The memory size increases the 
rate performance even in the presence of erasure and the benefit of
caching is significant for smaller erasure probabilities as expected 
from the analytical expression. 

Fig. \ref{fig:TtotvsM} compares the transmission length $T_{\rm tot}$, normalized by the file size $F$, achieved by our delivery scheme with feedback 
and the scheme without feedback for the case of the decentralized content placement. We consider the system with $N=100, K=10$
and the erasure probabilities of $\delta=0$ (perfect link), $0.2$, and $0.6$.  We observe that state feedback can be useful especially when the memory size is small and the erasure probability is large. In fact, it can be easily shown that the rate region of the cached-enabled EBC without feedback under the decentralized content placement is given by
\begin{align}\label{eq:NoFB-region1}
\sum_{k=1}^K \frac{\left(1-\frac{M}{N}\right)^k}{1-\delta} R_{\pi_k} \leq 1 
\end{align}
where the denominator in the LHS reflects the fact that each packet must be received by all $K$ users. This 
yields the transmission length given by
\begin{align}\label{eq:TnoFB1}
T_{\rm tot-noFB}=\frac{\sum_{k=1}^K \left(1-\frac{M}{N}\right)^k}{1-\delta}F+\epsilon_{F}. 
\end{align}
Under the centralized content placement, the rate region of the cached-enabled EBC without feedback is given by
\begin{align}\label{eq:NoFB-region2}
\sum_{k=1}^{K-b} \frac{{K-k \choose b}/{K \choose b}}{1-\delta} R_{\pi_k} \leq 1 
\end{align}
yielding 
\begin{align}\label{eq:TnoFB2}
T_{\rm tot-noFB}= \frac{K\left( 1-M/N\right) \frac{1}{1+KM/N}}{1-\delta}F+\epsilon_{F}.
\end{align}
Without state feedback, the transmission length in \eqref{eq:TnoFB1}, \eqref{eq:TnoFB2}
corresponds to the transmission length over the perfect link expanded by a factor
$\frac{1}{1-\delta}>1$, because each packet must be received by all users. The merit of feedback
becomes significant if the packets of lower-order dominate the order-$K$ packets.
The case of small $p=\frac{M}{N}$ and large erasure probability corresponds
to such a situation. 

%\begin{center}
	%\includegraphics[width=0.9\textwidth,clip=]{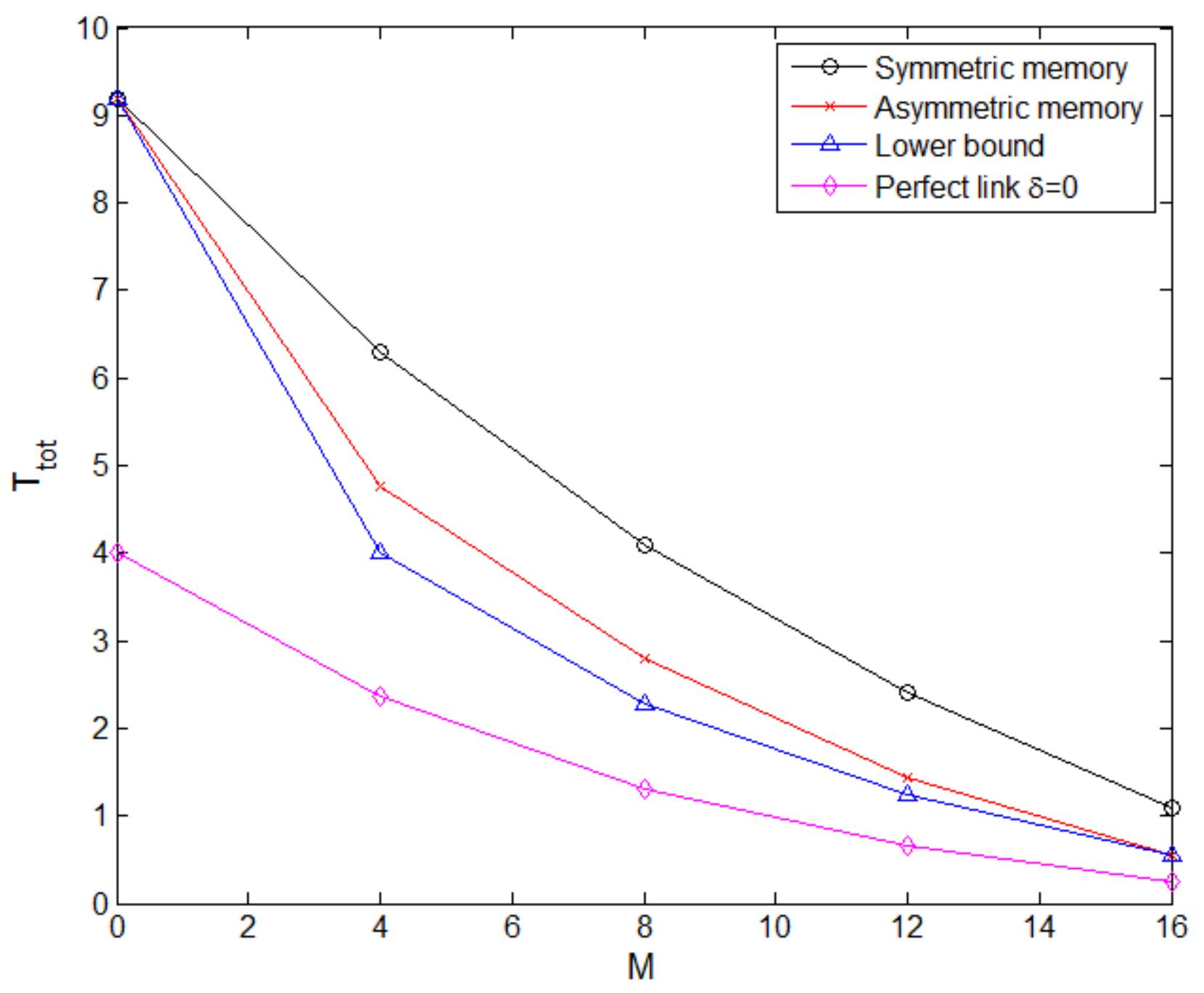}
	%%\vspace{-2pt}
	%%\caption{Number of transmissions vs the average memory for $\delta_i=\frac{i}{5}$, $N=20$, $K=4$ and $F_i=1$. }
	%\label{fig:TtotvsM}
%\end{center}

\begin{figure}
	\vspace{-10pt}
	\begin{center}
		\includegraphics[width=0.45\textwidth,clip=]{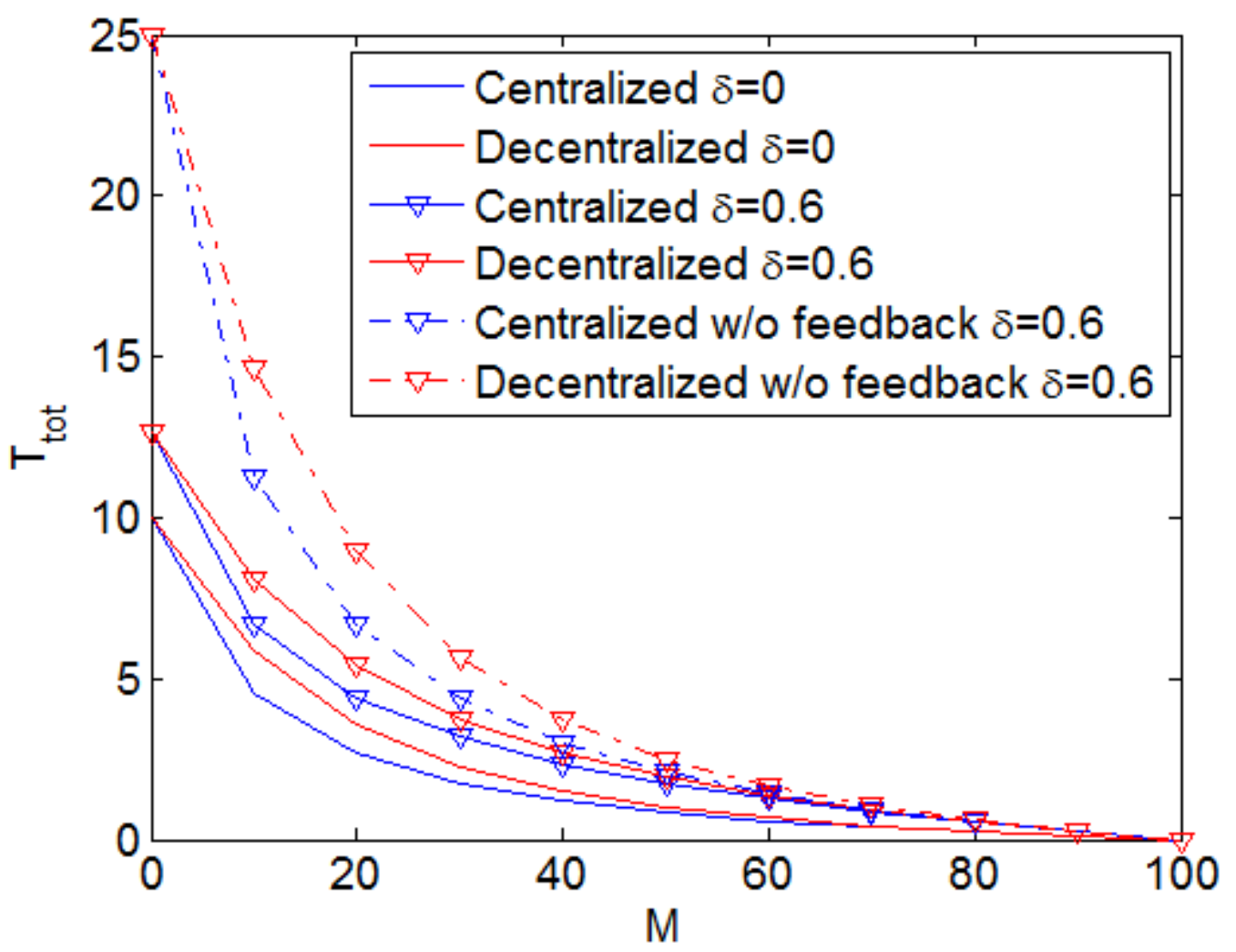}
		\vspace{-2pt}
		\caption{The transmission length $T_{\rm tot}$ as a function of memory size $M$ for $N=100, K=10$.}
		\label{fig:DECvsC}
	\end{center}
	\vspace{-15pt}
\end{figure}

\begin{figure}
	\vspace{-10pt}
	\begin{center}
		\includegraphics[width=0.45\textwidth,clip=]{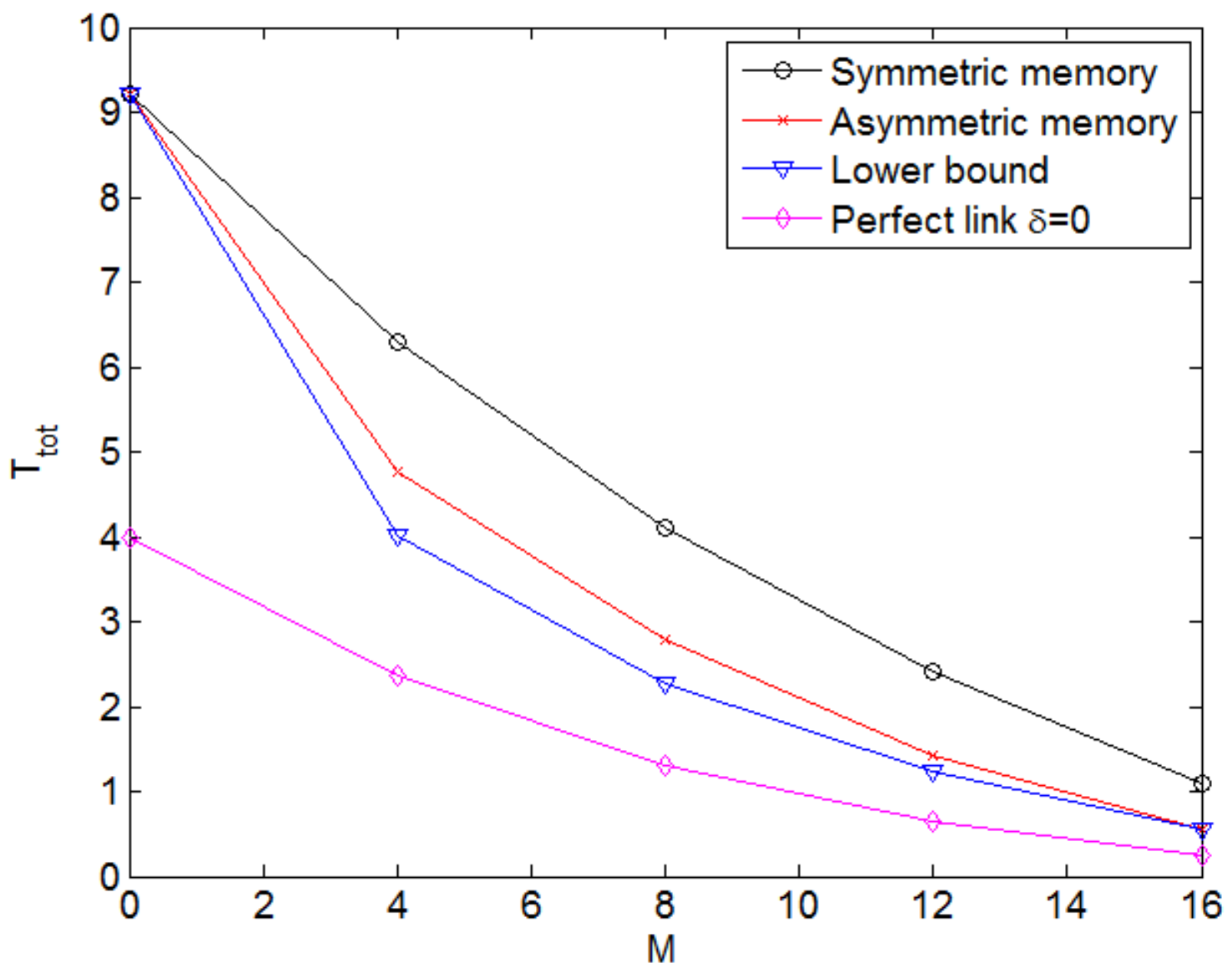}
		\vspace{-2pt}
		\caption{$\delta_i=\frac{i}{5}$, $N=20$, $K=4$ and $F_i=1$.}
		\label{fig:asym}
	\end{center}
	\vspace{-15pt}
\end{figure}

Fig. \ref{fig:DECvsC} plots the normalized transmission length $T_{\rm tot}/F$ versus the memory size $M$ in the symmetric network with $N=100, K=10$. 
 We compare the performance with and without feedback under the decentralized and the centralized caching
 for $\delta=0$ and $\delta=0.6$. %State feedback is found useful especially for small memory
 size.  The relative merit of the centralized content placement compared to the decentralized the counterpart can be observed.
  %We observe that the gap between the two caching scheme is independent from erasure,but it is larger in case of no feedback. 
 
 Fig. \ref{fig:asym} plots the normalized transmission length $T_{\rm tot}/F$ versus {\it average} memory size $M$ in the asymmetric network with $N=20$ and $K=4$ under the decentralized content placement. We let 
 erasure probabilities $\delta_k=\frac{k}{5}$ for $k=1,\dots, 4$ and consider files of equal size. We
 compare ``symmetric memory'' ($M_k=M, \forall k$), ``asymmetric memory'' obtained by optimizing
 over all possible sets of $\{M_k\}$ using our delivery scheme, as well as ``lower bound''
 obtained by optimizing over all possible of $\{M_k\}$ based on \eqref{eq:time}. This result shows the
 advantage (in terms of delivery time) of optimally allocating cache sizes across users, whenever
 possible, according to the condition of the delivery channels. 

%%%%%%%%%
\section{Conclusion}\label{section:conclusion}
In this paper, we investigated the content delivery problem in the erasure broadcast channel
(EBC) with state feedback, assuming that the content placement phase is performed with existing methods
proposed in the literature. Our main contribution was the characterization of the optimal rate
region of the channel under these conditions, based on a scheme that optimally exploits the
receiver side information acquired during the placement phase. This appears as a non-trivial
extension of the work by Wang and Gatzianas et
al.~\cite{wang2015fundamental,gatzianas2013multiuser} which have characterized the capacity
region of the EBC with state feedback for some cases of interest. We provided an intuitive
interpretation of the algorithm proposed in these works and revealed an explicit connection
between the capacity in the symmetric EBC and the degree of freedom (DoF) in the MISO-BC. More
specifically, we showed that there exists a duality in terms of the order-$j$ multicast
capacity/DoF. Such a connection was fully exploited to generalize our results to the
cache-enabled MISO-BC. Our work demonstrated the benefits of coded caching combined with state
feedback in the presence of random erasure. An interesting future direction is to include some
more practical constraints, such as the popularity profile of contents and the non-asymptotic
file size, into the current system model.

%%%%%%%%%
\appendix
In the appendix, we repeatedly use the following weight expression. 
\begin{align}\label{eq:appendix-weight}
w_{\Jc}=\frac{\prod_{j\in\Jc}(1-p_j)}{1-\prod_{j\in \Jc} \delta_j} = \frac{\overline{p}_{\Jc}}{ 1-\delta_{\Jc}}
\end{align}
where we let $\overline{p}_j= 1-p_j$ and use a short-hand notation $\delta_{\Jc}=\prod_{j\in \Jc} \delta_j$.

\subsection{Length of sub-phase} \label{appendix:subphases}
In this section, we prove \eqref{eq:SubphaseLength} given by
\begin{align}
t_{\Jc}^{\{k\}}=\sum_{\Hc:\Hc\subseteq\Jc\setminus\{k\}}(-1)^{|\Hc|}\frac{\prod_{j\in[K]\setminus\Jc\cup\{k\}\cup\Hc}(1-p_j)}{1-\prod_{j\in[K]\setminus\Jc\cup\{k\}\cup\Hc}\delta_j}F_k. 
\end{align}
To this end, we first introduce a new variable $g_{\Jc}^{\{k\}}=\frac{t^{\{k\}}_{\Jc}}{F_k}$ for $k\in \Jc \subseteq [K]$. Using \eqref{eq:SumSubphase} we obtain
\begin{align}\label{eq:SumNorSubphase}
\sum_{\Ic:k\in\Ic\subseteq\Jc}g_{\Ic}^{\{k\}}=w_{[K]\setminus\Jc\cup\{k\}}. 
\end{align}
We first need to prove the following lemma
\begin{lemma}\label{lemma:weight}
	For any nonempty set $[K]$ and $\Jc\subseteq[K]$. It holds
	\begin{align}
	\sum_{\Ic:\Ic\subseteq\Jc}\sum_{\Hc:\Hc\subseteq\Ic}(-1)^{|\Hc|}w_{[K]\setminus\Ic\cup\Hc}=w_{[K]\setminus\Jc}
	\end{align}
\end{lemma}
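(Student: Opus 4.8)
The plan is to prove Lemma~\ref{lemma:weight} purely by manipulating the nested sums, without ever invoking the explicit form of $w_{\mathcal{S}}$; in fact the stated identity holds for an \emph{arbitrary} family $\{w_{\mathcal{S}}\}_{\mathcal{S}\subseteq[K]}$, so the argument is combinatorial rather than analytic.

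First I would exchange the order of the two summations so that $\Hc$ becomes the outer index, rewriting the left-hand side as $\sum_{\Hc\subseteq\Jc}(-1)^{|\Hc|}\sum_{\Ic:\,\Hc\subseteq\Ic\subseteq\Jc}w_{([K]\setminus\Ic)\cup\Hc}$. The crucial step is then to reindex the inner sum by $\Ic=\Hc\cup\Mc$ with $\Mc\subseteq\Jc\setminus\Hc$; since $\Hc$ and $\Mc$ are disjoint one has $([K]\setminus\Ic)\cup\Hc=\bigl([K]\setminus(\Hc\cup\Mc)\bigr)\cup\Hc=[K]\setminus\Mc$, so the inner sum becomes $\sum_{\Mc\subseteq\Jc\setminus\Hc}w_{[K]\setminus\Mc}$, which depends on $\Hc$ only through the range of~$\Mc$.

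Next I would swap summations once more, this time between $\Hc$ and $\Mc$, to obtain $\sum_{\Mc\subseteq\Jc}w_{[K]\setminus\Mc}\Bigl(\sum_{\Hc\subseteq\Jc\setminus\Mc}(-1)^{|\Hc|}\Bigr)$. By the standard sign-reversing involution on subsets, the bracketed inner sum equals $0$ whenever $\Jc\setminus\Mc\neq\emptyset$ and equals $1$ when $\Mc=\Jc$. Hence only the term $\Mc=\Jc$ survives and the whole expression collapses to $w_{[K]\setminus\Jc}$, which is exactly the claim.

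I expect the only subtle point to be the set identity $\bigl([K]\setminus(\Hc\cup\Mc)\bigr)\cup\Hc=[K]\setminus\Mc$, which relies on the disjointness of $\Hc$ and $\Mc$ guaranteed by the constraint $\Hc\subseteq\Ic$; once this is in hand the argument is just two changes of summation order plus the elementary binomial cancellation. This lemma will then be used to verify the alternative expression \eqref{eq:SubphaseLength} for $t_{\Jc}^{\{k\}}$ by a Möbius-type inversion of \eqref{eq:SumNorSubphase}.
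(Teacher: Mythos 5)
Your proof is correct and is essentially the paper's own argument: both rest on the change of variables $(\Ic,\Hc)\mapsto(\Mc,\Hc)$ with $\Mc=\Ic\setminus\Hc$ (so that $([K]\setminus\Ic)\cup\Hc=[K]\setminus\Mc$), followed by swapping the order of summation and invoking $\sum_{\Hc\subseteq\Sc}(-1)^{|\Hc|}=\mathds{1}\{\Sc=\emptyset\}$ to kill every term except $\Mc=\Jc$. The only difference is cosmetic (you swap sums before reindexing, the paper reindexes first), and your observation that the identity holds for an arbitrary family $\{w_{\Sc}\}$ is accurate.
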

\begin{proof}
\begin{align}
\sum_{\Ic:\Ic\subseteq\Jc}\sum_{\Hc:\Hc\subseteq\Ic}(-1)^{|\Hc|}w_{[K]\setminus\Ic\cup\Hc}&=\sum_{\Ic:\Ic\subseteq\Jc}\sum_{\Hc:\Hc\subseteq\Ic}(-1)^{|\Hc|}w_{[K]\setminus(\Ic\setminus\Hc)}\\
&=\sum_{\Ic:\Ic\subseteq\Jc}\sum_{\Hc':\Hc'\subseteq\Ic}(-1)^{|\Ic\setminus\Hc'|}w_{[K]\setminus\Hc'}\label{eq:var1}\\
&=\sum_{\Hc':\Hc'\subseteq\Jc}\sum_{\Ic:\Hc'\subseteq\Ic\subseteq\Jc}(-1)^{|\Ic\setminus\Hc'|}w_{[K]\setminus\Hc'}\\
&=\sum_{\Hc':\Hc'\subseteq\Jc}w_{[K]\setminus\Hc'}\sum_{\Ic:\Hc'\subseteq\Ic\subseteq\Jc}(-1)^{|\Ic\setminus\Hc'|}\\
&=\sum_{\Hc':\Hc'\subseteq\Jc}w_{[K]\setminus\Hc'}\sum_{\Ic':\Ic'\subseteq\Jc\setminus\Hc'}(-1)^{|\Ic'|}\label{eq:var2}\\
&=w_{[K]\setminus\Jc}+\sum_{\Hc':\Hc'\subset\Jc}w_{[K]\setminus\Hc'}\sum_{\Ic':\Ic'\subseteq\Jc\setminus\Hc'}(-1)^{|\Ic'|}\\
&=w_{[K]\setminus\Jc}.
\end{align}
We set $\Hc'=\Ic\setminus\Hc$ and $\Ic'=\Ic\setminus\Hc'$ to obtain \eqref{eq:var1} and \eqref{eq:var2}, respectively. The last equality follows from $\sum_{\Ic:\Ic\subseteq\Jc}(-1)^{|\Ic|}=0$ for all $\Jc\neq\emptyset$.
\end{proof}
We prove \eqref{eq:SubphaseLength} by induction on $|\Jc|$. For $\Jc=\{i\}$ we have $\sum_{\Ic:i\in \Ic\subseteq \Jc}g^{\{i\}}_{\Ic}=g_{\Jc}^{\{i\}}$ and $\sum_{\Hc:\Hc\subseteq\Jc\setminus\{i\}}(-1)^{|\Hc|}w_{[K]\setminus\Jc\cup\{i\}\cup\Hc}=w_{[K]\setminus\Jc\cup\{i\}}$. By apply \eqref{eq:SumNorSubphase} for $\Jc=\{i\}$, we obtain the proof for $|\Jc|=1$.
 
%\begin{align}\label{subphaselength}
%g_{\Jc}^{\{i\}}=\sum_{\Hc\subseteq\Jc\setminus\{i\}}(-1)^{|\Hc|}w_{[K]\setminus\Jc\cup\{i\}\cup\Hc}
%\end{align}
Now suppose \eqref{eq:SubphaseLength} holds for any $\Ic\subseteq[K]$ such that $|\Ic|<|\Jc|$ and we prove in the following that it holds for $\Jc$ too.
We have
\begin{align}\label{subphaselength2}
\sum_{\Ic:i\in \Ic\subseteq \Jc}g^{\{i\}}_{\Ic}&=w_{[K]\setminus\Jc\cup\{i\}}\\
&=g_{\Jc}^{\{i\}}+\sum_{\Ic:i\in \Ic\subset \Jc}g^{\{i\}}_{\Ic}.
\end{align}
Thus, we obtain

\begin{align}
\MoveEqLeft{ g_{\Jc}^{\{i\}}=w_{[K]\setminus\Jc\cup\{i\}}-\sum_{\Ic:i\in \Ic\subset \Jc}g^{\{i\}}_{\Ic}}\\
&=w_{[K]\setminus\Jc\cup\{i\}}-\sum_{\Ic:i\in \Ic\subset \Jc}\sum_{\Hc:\Hc\subseteq\Ic\setminus\{i\}}(-1)^{|\Hc|}w_{[K]\setminus\Ic\cup\{i\}\cup\Hc}\\
&=w_{[K]\setminus\Jc\cup\{i\}}-\sum_{\Ic:i\in \Ic\subseteq \Jc}\sum_{\Hc:\Hc\subseteq\Ic\setminus\{i\}}(-1)^{|\Hc|}w_{[K]\setminus\Ic\cup\{i\}\cup\Hc}+\sum_{\Hc:\Hc\subseteq\Jc\setminus\{i\}}(-1)^{|\Hc|}w_{[K]\setminus\Jc\cup\{i\}\cup\Hc}\\
&=w_{[K]\setminus\Jc\cup\{i\}}-\sum_{\Ic:\Ic\subseteq \Jc\setminus\{i\}}\sum_{\Hc:\Hc\subseteq\Ic}(-1)^{|\Hc|}w_{[K]\setminus\Ic\cup\Hc}+\sum_{\Hc:\Hc\subseteq\Jc\setminus\{i\}}(-1)^{|\Hc|}w_{[K]\setminus\Jc\cup\{i\}\cup\Hc}\\
&=w_{[K]\setminus\Jc\cup\{i\}}-w_{[K]\setminus(\Jc\setminus\{i\})}+\sum_{\Hc:\Hc\subseteq\Jc\setminus\{i\}}(-1)^{|\Hc|}w_{[K]\setminus\Jc\cup\{i\}\cup\Hc}\label{eq:eq1}\\
&=\sum_{\Hc:\Hc\subseteq\Jc\setminus\{i\}}(-1)^{|\Hc|}w_{[K]\setminus\Jc\cup\{i\}\cup\Hc},
\end{align}
where \eqref{eq:eq1} is from Lemma \ref{lemma:weight}. 

\subsection{Existence of the permutation} \label{appendix:Ordering}
In this section, we prove that the worst user under the one-sided fair rate vector is determined by \eqref{eq:PiOrder}, namely
\begin{align}
\arg\max_{k\in\Jc}{t^{\{k\}}_{\Jc}}=\min\{\Jc\}\quad,\forall\,\Jc\subseteq[K].
\end{align}
We set $m=\min(\Jc)$ for any subset $\Jc\subseteq[K]$ such that $|\Jc|\geq2$. Proving \eqref{eq:PiOrder} is equivalent to prove  
\begin{align}\label{eq:gim}
R_m g_{\Jc}^{\{m\}}\geq R_ig_{\Jc}^{\{i\}}~~\forall i \in \Jc.
\end{align}
Recall that from our one-sided rate vector assumption we have for $i\in\Jc$ , $\delta_m\geq\delta_i$; $\delta_m R_m\geq\delta_iR_i$ and $\frac{\bar{p}_m}{p_m}R_m\geq\frac{\bar{p}_i}{p_i}R_i$. Plugging \eqref{eq:subfileWk} and \eqref{eq:Nij} into \eqref{eq:t^k_J2}, we obtain
\begin{align}\label{eq:gi}
g_{\Jc}^{\{i\}}&=\frac{1}{1-\delta_{[K]\setminus\Jc\cup\{i\}}}\left[ \sum_{\Ic:i\in\Ic\subset\Jc}g_{\Ic}^{\{i\}}\bar{\delta}_{\Jc\setminus\Ic}\delta_{[K]\setminus\Jc\cup\{i\}}+p_{\Jc\setminus\{i\}}\bar{p}_{[K]\setminus\Jc\cup\{i\}}\right], 
\end{align}
and
\begin{align}\label{eq:gm}
g_{\Jc}^{\{m\}}&=\frac{1}{1-\delta_{[K]\setminus\Jc\cup\{m\}}}\left[ \sum_{\Ic:m\in\Ic\subset\Jc}g_{\Ic}^{\{m\}}\bar{\delta}_{\Jc\setminus\Ic}\delta_{[K]\setminus\Jc\cup\{m\}}+p_{\Jc\setminus\{m\}}\bar{p}_{[K]\setminus\Jc\cup\{m\}}\right].
\end{align}
We prove by induction on $|\Jc|$ that $R_mg_{\Jc}^{\{m\}}\geq R_ig_{\Jc}^{\{i\}}$:
For $|\Jc|=2$ , $\Jc=\{m,i\}$ hence \eqref{eq:gi} and \eqref{eq:gm} imply the following
\begin{align}
g_{\Jc}^{\{i\}}&=\frac{1}{1-\delta_{[K]\setminus\Jc\cup\{i\}}}\left[ g_{i}^{\{i\}}\bar{\delta}_{m}\delta_{[K]\setminus\Jc\cup\{i\}}+p_{m}\bar{p}_{[K]\setminus\Jc\cup\{i\}}\right], 
\end{align}
and 
\begin{align}
g_{\Jc}^{\{m\}}&=\frac{1}{1-\delta_{[K]\setminus\Jc\cup\{m\}}}\left[ g_{m}^{\{m\}}\bar{\delta}_{i}\delta_{[K]\setminus\Jc\cup\{m\}}+p_{i}\bar{p}_{[K]\setminus\Jc\cup\{m\}}\right].  
\end{align}

Since $\delta_m\geq\delta_i$, it holds $\frac{1}{1-\delta_{[K]\setminus\Jc\cup\{m\}}}\geq\frac{1}{1-\delta_{[K]\setminus\Jc\cup\{i\}}}$ and $\bar{\delta}_i\geq\bar{\delta}_m$. Since $\frac{\bar{p}_m}{p_m}Rm\geq\frac{\bar{p}_i}{p_i}Ri$, then it holds $p_{i}\bar{p}_{[K]\setminus\Jc\cup\{m\}}R_m\geq p_{m}\bar{p}_{[K]\setminus\Jc\cup\{i\}}R_i$. In addition we have from \eqref{eq:SubphaseLength} : $g_m^{\{m\}}=g_i^{\{i\}}=\frac{\bar{p}_{[K]}}{1-\delta_{[K]}}$ and  $\delta_mR_m\geq\delta_iR_i$, thus we obtain $R_mg_{\Jc}^{\{m\}}\geq R_ig_{\Jc}^{\{i\}}$ for $|\Jc|=2$.

Suppose that \eqref{eq:gim} holds for any $\Ic\subseteq[K]$ such that $|\Ic|<|\Jc|$ and we prove that it holds also for $\Jc$ in the following.
%and $m=min(\Ic)$:  $R_mg_{\Ic}^{\{m\}}\geq R_ig_{\Ic}^{\{i\}}$ $\forall i\in\Ic$, and we prove it in the following that it holds also for $\Jc$.

Since $\delta_m\geq\delta_i$, it holds $\frac{1}{1-\delta_{[K]\setminus\Jc\cup\{m\}}}\geq\frac{1}{1-\delta_{[K]\setminus\Jc\cup\{i\}}}$. Since $\frac{\bar{p}_m}{p_m}R_m\geq\frac{\bar{p}_i}{p_i}R_i$, it holds\\
$p_{\Jc\setminus\{m\}}\bar{p}_{[K]\setminus\Jc\cup\{m\}}R_m\geq p_{\Jc\setminus\{i\}}\bar{p}_{[K]\setminus\Jc\cup\{i\}}R_i$. By observing \eqref{eq:gm} and \eqref{eq:gi}, it remains to prove that
\begin{align}
\MoveEqLeft{ R_m\sum_{\Ic:m\in\Ic\subset\Jc}g_{\Ic}^{\{m\}}\bar{\delta}_{\Jc\setminus\Ic}\delta_{[K]\setminus\Jc\cup\{m\}}}\geq R_i\sum_{\Ic:i\in\Ic\subset\Jc}g_{\Ic}^{\{i\}}\bar{\delta}_{\Jc\setminus\Ic}\delta_{[K]\setminus\Jc\cup\{i\}}.
 \end{align}
We have for user $m$
\begin{align}
\sum_{\Ic:m\in\Ic\subset\Jc}g_{\Ic}^{\{m\}}\bar{\delta}_{\Jc\setminus\Ic}\delta_{[K]\setminus\Jc\cup\{m\}}
&=\sum_{\Ic:\{m,i\}\subseteq\Ic\subset\Jc}g_{\Ic}^{\{m\}}\bar{\delta}_{\Jc\setminus\Ic}\delta_{[K]\setminus\Jc\cup\{m\}}
+\sum_{m\in\Ic\subset\Jc\setminus\{i\}}g_{\Ic}^{\{m\}}\bar{\delta}_{\Jc\setminus\Ic}\delta_{[K]\setminus\Jc\cup\{m\}}\\
&=\sum_{\Ic:\{m,i\}\subseteq\Ic\subset\Jc}g_{\Ic}^{\{m\}}\bar{\delta}_{\Jc\setminus\Ic}\delta_{[K]\setminus\Jc\cup\{m\}}
+\sum_{\Ic:\Ic\subset\Jc\setminus\{i,m\}}g_{\Ic\cup\{m\}}^{\{m\}}\bar{\delta}_{\Jc\setminus\Ic\setminus\{m\}}\delta_{[K]\setminus\Jc\cup\{m\}},
\end{align}
and similarly for user $i$
\begin{align}
\sum_{\Ic:i\in\Ic\subset\Jc}g_{\Ic}^{\{i\}}\bar{\delta}_{\Jc\setminus\Ic}\delta_{[K]\setminus\Jc\cup\{i\}}
&=\sum_{\Ic:\{m,i\}\subseteq\Ic\subset\Jc}g_{\Ic}^{\{i\}}\bar{\delta}_{\Jc\setminus\Ic}\delta_{[K]\setminus\Jc\cup\{i\}}+\sum_{\Ic:i\in\Ic\subset\Jc\setminus\{m\}}g_{\Ic}^{\{i\}}\bar{\delta}_{\Jc\setminus\Ic}\delta_{[K]\setminus\Jc\cup\{i\}}\\
&=\sum_{\Ic:\{m,i\}\subseteq\Ic\subset\Jc}g_{\Ic}^{\{i\}}\bar{\delta}_{\Jc\setminus\Ic}\delta_{[K]\setminus\Jc\cup\{i\}}
+\sum_{\Ic:\Ic\subset\Jc\setminus\{m,i\}}g_{\Ic\cup\{i\}}^{\{i\}}\bar{\delta}_{\Jc\setminus\Ic\setminus\{i\}}\delta_{[K]\setminus\Jc\cup\{i\}}.
\end{align}
For any $\Ic$ satisfying $\{m,i\}\subseteq\Ic\subset\Jc$ we have $|\Ic|<|\Jc|$, $min(\Ic)=m$ and $i\in\Ic$ so by the hypothesis we have $g_{\Ic}^{\{m\}}R_m\geq g_{\Ic}^{\{i\}}R_i$. In addition we have $\delta_m\geq\delta_i$ thus 
\begin{align}
\sum_{\Ic:\{m,i\}\subseteq\Ic\subset\Jc}g_{\Ic}^{\{m\}}\bar{\delta}_{\Jc\setminus\Ic}\delta_{[K]\setminus\Jc\cup\{m\}}R_m
\geq\sum_{\Ic:\{m,i\}\subseteq\Ic\subset\Jc}g_{\Ic}^{\{i\}}\bar{\delta}_{\Jc\setminus\Ic}\delta_{[K]\setminus\Jc\cup\{i\}}R_i.
\end{align}
For any $\Ic$ satisfying $\Ic\subset\Jc\setminus\{m,i\}$ we have from \eqref{eq:SubphaseLength}  $g_{\Ic\cup\{m\}}^{\{m\}}=g_{\Ic\cup\{i\}}^{\{i\}}$. In addition we have $\bar{\delta_i}\geq\bar{\delta}_m$ and $R_m\delta_m\geq R_i\delta_i$, then $\bar{\delta}_{\Jc\setminus\Ic\setminus\{m\}}\delta_{[K]\setminus\Jc\cup\{m\}}R_m\geq\bar{\delta}_{\Jc\setminus\Ic\setminus\{i\}}\delta_{[K]\setminus\Jc\cup\{i\}}R_i$. As a result we obtain 
 \begin{align}
 R_m\sum_{\Ic\subset\Jc\setminus\{i,m\}}g_{\Ic\cup\{m\}}^{\{m\}}\bar{\delta}_{\Jc\setminus\Ic\setminus\{m\}}\delta_{[K]\setminus\Jc\cup\{m\}}
 \geq R_i\sum_{\Ic\subset\Jc\setminus\{m,i\}}g_{\Ic\cup\{i\}}^{\{i\}}\bar{\delta}_{\Jc\setminus\Ic\setminus\{i\}}\delta_{[K]\setminus\Jc\cup\{i\}}.
 \end{align}
  Hence the proof is completed.

\subsection{The outer-bound under the one-sided fair rate vector} \label{appendix:Implication}

Suppose that there exists $\pi_1$ such that $\sum_{j=1}^{K}R_{\pi_1(j)}w_{\pi_1(1)..\pi_1(j)}\leq 1$ and that  $\pi_1(i)\leq\pi_1(i+1)$ holds for some  $i\in[K-1]$. We prove that for any permutation $\pi_2$ that satisfies $\pi_2(i+1)=\pi_1(i)=k$, $\pi_2(i)=\pi_1(i+1)=k'$
 and $\pi_1(j)=\pi_2(j)$ $\forall$ $j\in[K]\setminus\{i,i+1\}$, it holds $\sum_{j=1}^{K}R_{\pi_2(j)}w_{\pi_2(1)..\pi_2(j)}\leq 1$.
It suffices to show that 
\begin{align}
w_{\pi_1(1)..\pi_1(i)}R_{\pi_1(i)}+w_{\pi_1(1)..\pi_1(i+1)}R_{\pi_1(i+1)}
&\geq w_{\pi_2(1)..\pi_2(i)}R_{\pi_2(i)}+w_{\pi_2(1)..\pi_2(i+1)}R_{\pi_2(i+1)}\nonumber
\end{align}
equivalent to 
\begin{align}
 (w_{\pi_1(1)..\pi_1(i)}-w_{\pi_2(1)..\pi_2(i+1)})R_{\pi_1(i)}
 &\geq (w_{\pi_2(1)..\pi_2(i)}-w_{\pi_1(1)..\pi_1(i+1)})R_{\pi_1(i+1)}\nonumber
\end{align} 
equivalent to
\begin{align} \label{region} 
 (w_{\Ic k}-w_{\Ic kk'})R_{k}
 &\geq (w_{\Ic k'}-w_{\Ic kk'})R_{k'},
\end{align}
where $\Ic=\pi_1(1)..\pi_1(i-1)$. By replacing the weight by its expression \eqref{eq:appendix-weight} we obtain
\begin{align}
w_{\Ic k}-w_{\Ic kk'}&=\frac{\bar{p}_{\Ic k}}{1-\delta_{\Ic k}}-\frac{\bar{p}_{\Ic kk'}}{1-\delta_{\Ic kk'}} \\
 &=\bar{p}_{\Ic k}\left[ \frac{1}{1-\delta_{\Ic k}}-\frac{1}{1-\delta_{\Ic kk'}}+\frac{p_{k'}}{1-\delta_{\Ic kk'}}\right]  \\
 &=\bar{p}_{\Ic k}\left[ \frac{(1-\delta_{\Ic kk'})-(1-\delta_{\Ic k})}{(1-\delta_{\Ic k})(1-\delta_{\Ic kk'})}+\frac{p_{k'}}{1-\delta_{\Ic kk'}}\right]  \\
 &=\frac{\bar{p}_{\Ic k}}{1-\delta_{\Ic kk'}}\left[ \frac{\delta_{\Ic k}(1-\delta_{k'})}{(1-\delta_{\Ic k})}+p_{k'}\right],  
\end{align}
and similarly

\begin{align}
w_{\Ic k'}-w_{\Ic kk'}=\frac{\bar{p}_{\Ic k'}}{1-\delta_{\Ic kk'}}\left[ \frac{\delta_{\Ic k'}(1-\delta_{k})}{(1-\delta_{\Ic k'})}+p_{k}\right].  
\end{align}
Thus, \eqref{region} is equivalent to 
\begin{align} 
  \frac{\delta_{\Ic }(1-\delta_{k'})}{(1-\delta_{\Ic k})}\bar{p}_{k}\delta_{k}R_{k}-\frac{\delta_{\Ic }(1-\delta_{k})}{(1-\delta_{\Ic k'})}\bar{p}_{k'}\delta_{k'}R_{k'}+\left( \bar{p}_{k}p_{k'}R_{k}-\bar{p}_{k'}p_{k}R_{k'}\right)&\geq0. 
\end{align}
Since $k\leq k'$ then $\delta_{k}\geq\delta_{k'}$, so it is sufficient to prove that 
\begin{align} 
 \frac{\delta_{\Ic }(1-\delta_{k})}{(1-\delta_{\Ic k'})}\underbrace{\left[ \bar{p}_{k}\delta_{k}R_{k}-\bar{p}_{k'}\delta_{k'}R_{k'}\right]}_{A} +\underbrace{\left( \bar{p}_{k}p_{k'}R_{k}-\bar{p}_{k'}p_{k}R_{k'}\right)}_{B}&\geq0.
\end{align}
This is satisfied if $A\geq 0$ and $B\geq 0$. The condition B holds thanks to the definition of one-sided fair rate vector, and it  is equivalent to 
 \begin{align}
 \frac{R_{k'}}{R_k} \leq \frac{\bar{p}_{k} p_{k'}}{\bar{p}_{k'} p_k} \eqdef \theta.
 \end{align}
 
 We will examine condition A by considering the case $p_{k'}\geq p_k$ and $p_k\geq p_{k'}$ separately.
 \begin{itemize}
 \item Case $\theta >1$ \\
 In this case we have $p_k <p_{k'}$, or $\bar{p}_{k} > \bar{p}_{k'}$. 
 Condition A reduces to:
 \[   \delta_kR_k - \delta_{k'} R_{k'}  \geq 0.\]
 \item Case $\theta <1$\\
 In this case we have $p_k> p_{k'}$ or $\bar{p}_{k}< \bar{p}_{k'}$. Then we have
 \[
 \frac{R_{k'}}{R_k}  \leq \frac{\bar{p}_{k} p_{k'}}{\bar{p}_{k'} p_k }  \leq \frac{\bar{p}_{k}\delta_k}{\bar{p}_{k'}\delta_{k'} }  \leq \frac{\delta_k}{ \delta_{k'} }.
 \]
 This means that B implies A so that the desired inequality holds once B holds. Since A is inactive, we can 
 then consider a looser bounds  \[   \delta_kR_k - \delta_{k'} R_{k'}  \geq 0,\]
 which holds by the definition of one-sided fair rate vector.
 \end{itemize}
 Thus we obtain the result. Starting by $\pi_1$ as the identity we can obtain all the remaining $K!-1$ permutations.

%%%%%%%%%%%%%%%%

\end{document}